\newtheorem{theorem}{Theorem}[section]
\newtheorem{proposition}[theorem]{Proposition}
\newtheorem{lemma}[theorem]{Lemma}
\theoremstyle{definition}
\newtheorem{definition}[theorem]{Definition}
\newtheorem{example}[theorem]{Example}
\newtheorem{problem}[theorem]{Problem}
\theoremstyle{remark}
\newtheorem{remark}[theorem]{Remark}
\numberwithin{equation}{section}
\numberwithin{theorem}{section}
\def\N{\mathbb{N}}
\def\Z{\mathbb{Z}}
\def\R{\mathbb{R}}
\def\C{\mathbb{C}}
\def\det{\mathrm{det}}
\def\per{\mathrm{per}}
\def\sgn{\mathrm{sgn}}
\def\GL{\mathrm{GL}}
\def\SL{\mathrm{SL}}
\def\VNP{\mathrm{VNP}}
\def\VP{\mathrm{VP}}
\def\dc{\mathrm{dc}}
\def\wt{\mathrm{weight}}
\def\rank{\mathrm{rank}}
\newcommand{\chara}{{\mathrm{char}}}
\def\Sym{\mathrm{Sym}}
\newcommand{\Det}{\Omega} 
\newcommand{\ol}[1]{\overline{#1}}
\def\e{\epsilon}
\def\la{\lambda}
\def\pleth{\mathrm{pleth}}
\def\mdet{\mathrm{multdet}}
\def\mult{\mathrm{mult}}
\def\ot{\otimes} 
\def\sP{\#\mathrm{P}}
\def\stab{\mathrm{stab}}
\def\id{\mathrm{id}}
\def\diag{\mathrm{diag}}
\def\NP{\mathrm{NP}}
\def\coNP{\mathrm{coNP}}
\title{Permanent versus determinant, obstructions,\\ and Kronecker
  coefficients$^*$}
\thanks{$^*$
This is an elaboration of a series of three lectures at the 
75th S\'eminaire Lotharingien de Combinatoire and 
XX Incontro Italiano di Combinatorica Algebrica in
Bertinoro, Italy, September 6--9, 2015}
\author{Peter B\"urgisser$\dagger$}
\thanks{$\dagger$
Institute of Mathematics, Technische Universit\"at Berlin, 
pbuerg@math.tu-berlin.de.
Partially supported by DFG grant BU 1371/3-2.}
\date{}
\keywords{Permanent versus determinant, geometric complexity theory, orbit closures, representations, 
plethysms, Kronecker coefficients, Young tableaux, highest weight vectors}
\subjclass[2000]{68Q17, 20C30, 05E10, 14L24}
\begin{document}
\maketitle

\begin{abstract}
We give an introduction to some of the 
recent ideas that go under the name ``geometric complexity theory''.
We first sketch the proof of the known upper 
and lower bounds for the determinantal complexity of the permanent.
We then introduce the concept of a representation theoretic obstruction, 
which has close links to algebraic combinatorics, and we explain some of the insights
gained so far. In particular, we address very recent insights on the
complexity of testing the positivity of Kronecker coefficients. 
We also briefly discuss the related asymptotic version of this question.
\end{abstract}

\section{Motivation} 

The {\em determinant polynomial} is defined as 
$$
 \det_n := \det(X) := \sum_{\pi\in S_n} \sgn(\pi) \prod_{i=1}^n x_{i\pi(i)} ,
$$
where $x_{ij}$ are variables over a field $K$. 
The determinant derives its importance from the fact that it defines a 
group homomorphism $\det\colon \GL_n(K) \to K^\times$ due to 
$$
 \det(X\cdot Y) = \det (X)\, \det(Y) .
$$
It is highly relevant for computational mathematics that the determinant 
has an efficient computation. For instance, by using Gaussian
elimination, it can be computed with $O(n^3)$ arithmetic operations.

The definition of the {\em permanent polynomial} looks similarly as
for that of the determinant: 
$$
 \per_n := \per(X) := \sum_{\pi\in S_n} \prod_{i=1}^n x_{i\pi(i)} ,
$$
but without the sign changes. The permanent has less 
symmetries: 
$\per(X\cdot Y) = \per(X) \per(Y)$ holds 
if $X$ is a product of a permutation 
and a diagonal matrix, or if $Y$ is so; but in general, the 
multiplicativity property is violated. 
Also, for the permanent, there is no known efficient computation. 
We don't know whether there is a polynomial time algorithm for computing it. 
The permanent often shows up in algebraic combinatorics and statistical physics 
as a generating function in enumeration problems. 

In computer science, the permanent is known as a universal (or complete) problem 
in a class of weighted enumeration problems. One says that the family 
$(\per_n)$ of permanents is $\VNP$-complete. This theory was created in 1979
by L.~Valiant~\cite{Val:79b}. 
See \cite{buer:00-3,mapo:08} for more information.   

Proving that computing $\per_n$ requires superpolynomially many
arithmetic operations in~$n$ 
is considered the holy grail of algebraic complexity theory. This essentially amounts to proving 
the separation $\VP\ne\VNP$ of complexity classes. 
This separation is an ``easier'' variant of the famous P$\,\ne\,$NP problem. 

\section{Determinantal complexity}

Note that 
$\per\begin{bmatrix} a & b \\ c & d \end{bmatrix} = \det \begin{bmatrix} a & -b \\ c & d \end{bmatrix}$.  
P\'olya~\cite{polya:13} asked in 1913 whether such a formula is also
possible for $n\ge 3$, i.e., whether there is a sign matrix $[\epsilon_{ij}]$ 
such that $\per_n = \det[\epsilon_{ij} x_{ij}]$. 
This was disproved by Szeg{\H o}~\cite{szego:13} in the same year. 
Marcus and Minc~\cite{marcus-minc:61} strengthened this result by showing that there is no 
matrix $[f_{pq }]$ of linear forms $f_{pq}$ in the variables $x_{ij}$ such that 
$\per_n = \det[f_{pq}]$. 

\medskip

But what happens if we allow for the determinant a larger matrix? 
\medskip

We can express $\per_3$ as the determinant of a matrix of size 7, 
whose entries are constants or variables, cf.~\cite{grenet:11}: 
$$
 \per_3 = \det\begin{bmatrix} 
0&0&0&0&x_{33}&x_{32}&x_{31}\\
x_{11}&1&0&0&0&0&0\\
x_{12}&0&1&0&0&0&0\\
x_{13}&0&0&1&0&0&0\\
0&x_{22}&x_{21}&0&1&0&0\\
0&x_{23}&0&x_{21}&0&1&0\\
0&0&x_{23}&x_{22}&0&0&1
\end{bmatrix}.
$$
\medskip

\begin{definition}
The {\em determinantal complexity} $\dc(f)$ of a polynomial~$f\in K[x_1,\ldots,x_N]$ 
is the smallest~$s$ such that there exists a square matrix $A$ of size $s$, whose entries are affine 
linear functions of $x_1,\ldots,x_N$, such that $f= \det(A)$.  
Moreover, we write $\dc(m) := \dc(\per_m)$. 
\end{definition}

We clearly have $\dc(2)=2$. 
By the above formula, $\dc(3)\le 7$. 
Recent work showed the optimality: $\dc(3)=7$; cf.\ 
\cite{hu-ik:14,al-bo-ve:15}.

\subsection{An upper bound}

The following nice upper bound is due to Grenet~\cite{grenet:11}, 
based on ideas in Valiant~\cite{Val:79b}. 

\begin{theorem}[Grenet]\label{th:grenet}
We have $\dc(m) \le 2^m -1$.  
\end{theorem}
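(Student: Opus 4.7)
The plan is to write down an explicit $(2^m-1)\times (2^m-1)$ matrix $A$ with affine linear entries such that $\det(A) = \pm\per_m$, following Grenet's construction, and then to verify the equality by interpreting the determinant as a signed sum over cycle covers of the associated weighted digraph.

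First I would index the rows and columns of $A$ by the $2^m-1$ subsets of $[m] := \{1,\ldots,m\}$ obtained by identifying $\emptyset$ with $[m]$ (equivalently, by the nonempty proper subsets of $[m]$ together with a single ``source-sink'' vertex $\emptyset$). On this vertex set I place the edges $S \to S\cup\{i\}$ of weight $x_{|S|+1,\,i}$ for every $i\notin S$, plus a self-loop of weight $1$ at every vertex except~$\emptyset$. The matrix $A$ is the weighted adjacency matrix of this digraph, and its entries are affine linear in the $x_{ij}$ by construction; the case $m=3$ recovers precisely the $7\times 7$ matrix displayed above.

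Next I would expand $\det(A) = \sum_\sigma \sgn(\sigma)\prod_v A_{v,\sigma(v)}$ and interpret each $\sigma$ as a cycle cover of the digraph. The key combinatorial observation is that along every non-loop edge the cardinality $|S|$ strictly increases, so the only cycles available are self-loops and those passing through the identified vertex $\emptyset \sim [m]$. A cycle through that vertex necessarily has length exactly $m$ and corresponds to a maximal chain $\emptyset \subsetneq \{\pi(1)\} \subsetneq \{\pi(1),\pi(2)\} \subsetneq \cdots \subsetneq [m]$, which encodes a unique permutation $\pi \in S_m$. Consequently every cycle cover consists of exactly one such $m$-cycle together with self-loops at the remaining $2^m-1-m$ vertices, contributing $\sgn(\sigma)\,\prod_{j=1}^{m} x_{j,\pi(j)}$ to $\det(A)$.

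A short parity computation shows that the corresponding $\sigma \in S_{2^m-1}$ has $1 + (2^m-1-m) = 2^m - m$ cycles, so $\sgn(\sigma) = (-1)^{(2^m-1)-(2^m-m)} = (-1)^{m-1}$, independently of $\pi$. Summing over $\pi \in S_m$ yields $\det(A) = (-1)^{m-1}\per_m$, and for even $m$ one absorbs the sign by negating a single entry of $A$ without affecting affinity. The main obstacle is the combinatorial step ruling out ``parasitic'' cycle covers; once the strict monotonicity of $|S|$ along non-loop edges is formalized this is essentially immediate, and what remains is routine bookkeeping on signs.
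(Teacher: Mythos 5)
Your construction and verification are essentially identical to Grenet's proof in the paper: the same digraph on $2^{[m]}$ with $\emptyset$ identified with $[m]$, unit loops at the other vertices, and the same cycle-cover interpretation of $\det(A)$ with the sign $(-1)^{m-1}$ coming from the single $m$-cycle; you in fact supply more detail than the paper on excluding parasitic cycle covers. The one small slip is at the end: to absorb the factor $(-1)^{m-1}$ you must negate an entire row (or column) of $A$, not a single entry, since negating one entry does not simply flip $\det(A)$.
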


\begin{proof}
1. We first give the determinant of a matrix $A$ of size $m$ 
a combinatorial interpretation. 
We consider the complete directed graph with the node set 
$[m]:=\{1,2,\ldots m\}$ and the edges $(i,j)$ carrying the weight~$a_{ij}$. 
Moreover, we interpret a permutation $\pi$ of $[m]$ as the collection of 
their disjoint cycles (including loops for the fixed points) and call this 
a {\em cycle cover} $c$ of the digraph. We write $\sgn(c):=\sgn(\pi)$. 
The weight of $c$ is defined as 
the product of the weights of the edges occurring in $c$. 

Then we see that $\det(A)$ equals the sum of the signed weights 
over all cycle covers of the digraph: 
$$
 \det(A) = \sum_c \sgn(c) \wt(c) .
$$

2. We build now a digraph $P_m$ (see Figure~\ref{fig:jan}). 
Its node set is the power set $2^{[m]}$ of $[m]$.
For each $S\in 2^{[m]}$ of size $i-1$, where $1\le i \le m$,  
and $j\in [m]\setminus S$,  
we form a directed edge from $S$ to $S\cup\{j\}$ of weight $x_{ij}$. 
It is easy to see that 
$$
 \per_m(X) = \sum_\pi \wt(\pi) ,
$$
where the sum is over all directed paths $\pi$ going from
$\emptyset$ to $[m]$.
(We define the weight of $\pi$ as the product of the weights 
of its edges.)   

\begin{figure}[h]
\begin{center}
\includegraphics[width=\textwidth]{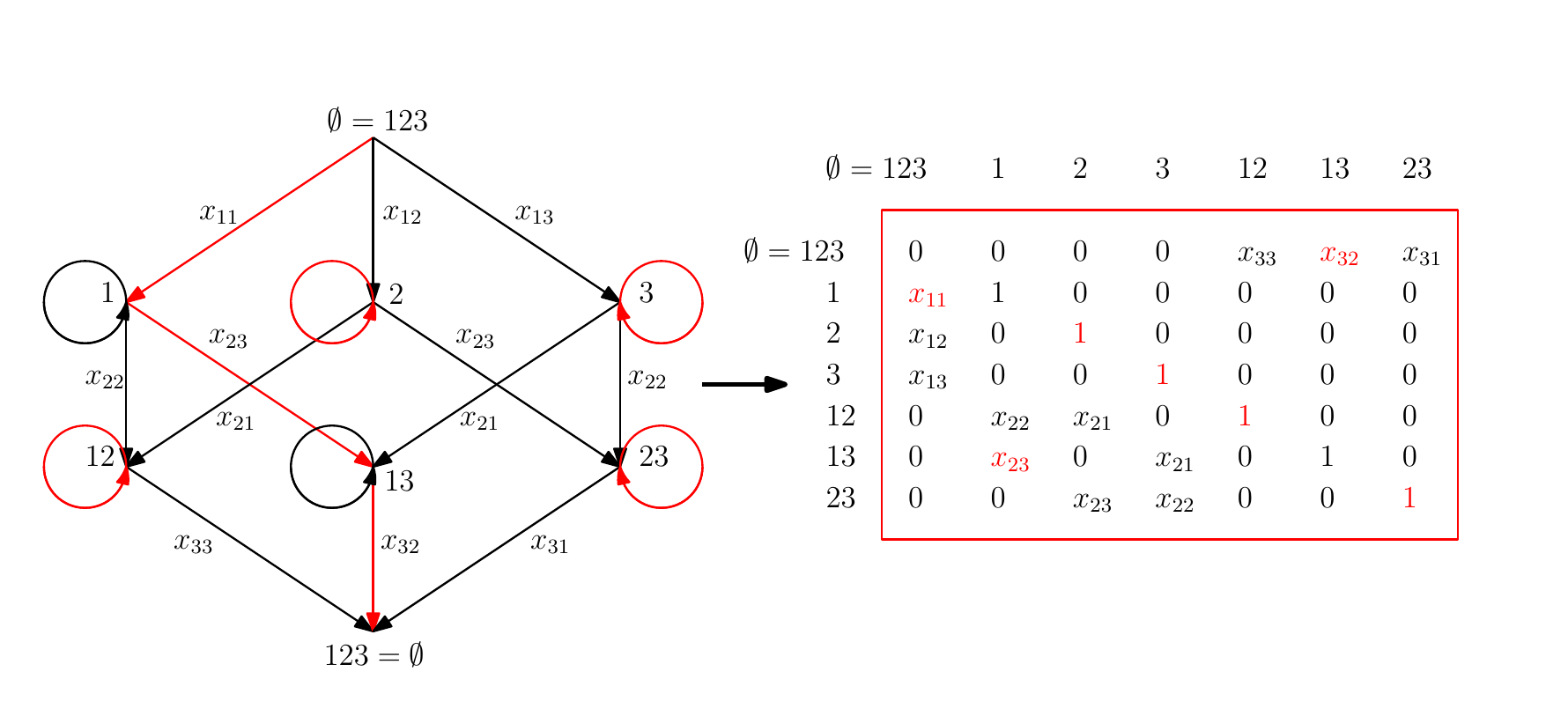}
\end{center}
\caption{{\small The construction for $m=3$. Courtesy of J.~Draisma~\cite{draisma-siam:15}.}}\label{fig:jan}
\end{figure}

We perform some modifications in this graph: we add loops 
of weight one at all nodes $S\in 2^{[m]}$ different from $\emptyset$ and $[m]$, 
and we identify the node $\emptyset$ with the node $[m]$. 
Let $A$ denote the weighted adjacency matrix of the resulting digraph. 
Its size is $2^m -1$. 

Then it is easy to see that we obtain a weight preserving bijection between the 
set of directed paths $\pi$ between $\emptyset$ and $[m]$ in the 
original digraph and the set of cycle covers $c_\pi$ in the modified digraph.
We obtain 
$$
 (-1)^{m-1} \per_m(X) = \sum_\pi (-1)^{m-1} \wt(\pi) = \sum_c \sgn(c) \wt(c) ,
$$
which shows that indeed $\dc(m) \le 2^m -1$. 
\end{proof}

Landsberg and Ressayre~\cite{lare:15} recently proved that 
the representation $\per_m = \det (A)$ in the proof of
Theorem~\ref{th:grenet} is optimal among all representations 
respecting ``half of the symmetries'' of $\per_m$.

\subsection{A lower bound}

The following result due to Mignon and Ressayre~\cite{mignon-ressayre:04} is the 
best known lower bound for $\dc(m)$, 
except for a recent improvement over $K=\R$ due 
to Yabe~\cite{yabe:15}, which states 
$(m-1)^2 +1 \le \dc(m)$. 

\begin{theorem}[Mignon and Ressayre]\label{th:mr}
We have $m^2/2 \le \dc(m)$ if $\chara K= 0$. 
\end{theorem}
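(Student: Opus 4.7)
The plan is to compare the Hessians of $\per_m$ and $\det_n$ via the chain rule and extract a rank bound. Suppose $\per_m(x)=\det_n(A(x))$ where $A$ is affine in the $m^2$ variables $x_{ij}$ and takes values in the $n\times n$ matrices; let $dA\colon K^{m^2}\to K^{n^2}$ denote its (constant) linear part. For any $y\in K^{m\times m}$, setting $B:=A(y)$, the chain rule gives
\[
\mathrm{Hess}(\per_m)(y)=(dA)^{T}\cdot\mathrm{Hess}(\det_n)(B)\cdot dA,
\]
so $\rank\,\mathrm{Hess}(\per_m)(y)\le\rank\,\mathrm{Hess}(\det_n)(B)$.

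The first key step is a linear-algebra lemma: for every singular $n\times n$ matrix $B$ one has $\rank\,\mathrm{Hess}(\det_n)(B)\le 2n$. Using the $\GL_n\times\GL_n$ symmetry of $\det_n$ under $B\mapsto UBV$, one reduces to $B=\diag(I_{n-r},0_r)$ with $r\ge 1$. Decomposing $H$ into matching blocks and applying the Schur-complement expansion
\[
\det_n(B+H)=\det(I_{n-r}+H_{11})\cdot\det\bigl(H_{22}-H_{21}(I+H_{11})^{-1}H_{12}\bigr),
\]
one extracts the quadratic-in-$H$ part explicitly: for $r=1$ it equals $\mathrm{tr}(H_{11})\cdot H_{22}-H_{21}H_{12}$, a quadratic form of rank exactly $2n$; for $r=2$ it reduces to $\det H_{22}$, of rank $4$; for $r\ge 3$ it vanishes. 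The maximum is $2n$.

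The second key step is to exhibit a point $y^{*}$ with $\per_m(y^{*})=0$ and $\mathrm{Hess}(\per_m)(y^{*})$ nondegenerate. Since the second partial $\partial^{2}\per_m/\partial x_{ij}\partial x_{kl}$ is, for $i\neq k$ and $j\neq l$, the permanent of the $(m-2)\times(m-2)$ submatrix obtained from $X$ by deleting rows $i,k$ and columns $j,l$, evaluation at the all-ones matrix $J_m$ yields
\[
\mathrm{Hess}(\per_m)(J_m)=(m-2)!\cdot(J_m-I_m)\otimes(J_m-I_m),
\]
a tensor of rank $m^{2}$ because $J_m-I_m$ has eigenvalues $m-1$ and $-1$; here $\chara K=0$ is used to ensure $(m-2)!\neq 0$. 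Since Hessian nondegeneracy is a nonempty Zariski-open condition, a concrete candidate on the hypersurface $\{\per_m=0\}$ is $y^{*}:=J_m-m\cdot E_{11}$: row-one expansion gives $\per_m(y^{*})=m!-m(m-1)!=0$, and a direct block-wise verification shows that $\mathrm{Hess}(\per_m)(y^{*})$ still has rank $m^{2}$.

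Combining these pieces, at $y^{*}$ one has $\det_n(A(y^{*}))=\per_m(y^{*})=0$, so $A(y^{*})$ is singular, and
\[
m^{2}=\rank\,\mathrm{Hess}(\per_m)(y^{*})\le\rank\,\mathrm{Hess}(\det_n)(A(y^{*}))\le 2n,
\]
yielding $\dc(m)=n\ge m^{2}/2$. The main obstacle is the second step: while the computation at $J_m$ is immediate, ensuring that nondegeneracy of the Hessian persists at a point of $\{\per_m=0\}$ requires either a careful direct calculation at $y^{*}$, tracking how the altered entry $y^{*}_{11}=1-m$ propagates through the sub-permanents that make up the Hessian, or an abstract argument that $\per_m$ does not divide the discriminant polynomial $y\mapsto\det\mathrm{Hess}(\per_m)(y)$.
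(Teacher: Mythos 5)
Your proposal is correct and follows essentially the same route as the paper: compare Hessians via the chain rule, bound $\rank H_{\det}(N)\le 2n$ at singular $N$ using the $\GL_n\times\GL_n$ symmetry, and exhibit a full-rank Hessian point on $\{\per_m=0\}$ --- indeed your witness $y^*=J_m-mE_{11}$ is exactly the matrix $M$ used in the paper. The one step you flag as incomplete (full rank of $H_{\per}$ at $y^*$ rather than at $J_m$) is also left unverified in the paper, which merely calls it ``elementary, though a bit cumbersome,'' while your Schur-complement computation of the determinant's Hessian is actually more detailed than the paper's sketch.
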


\begin{proof}
The idea is to consider the {\em Hessian} $H_f$ of a polynomial 
$f\in K[x_1,\ldots,x_N]$:
$$
 H_f := \left[ \frac{\partial^2f}{\partial x_\alpha \partial x_\beta} \right]_{1\le\alpha,\beta\le N} .
$$
We note that 
$\frac{\partial^2\det_n}{\partial x_{ij} \partial x_{k\ell}}$ 
equals the minor of $X$ obtained by deleting the rows $i,j$ and 
colums $j,\ell$. 

The following is straightforward to verify using the chain rule.

\begin{lemma}\label{le:HT}
If we perform an affine linear transformation on 
$f\in K[x_1,\ldots,x_N]$, namely,
$$
 F(x_1,\ldots,x_M) := f\Big( L\cdot\begin{bmatrix}x_1\\ \vdots\\ x_M\end{bmatrix} + b\Big),
 \quad L\in K^{N\times M}, b\in K^N ,
$$
then 
$$
  H_F(x) = L^T H_f(Lx + b) L .
$$
\end{lemma}

Now we assume $\dc(m) \le n$. 
This means we have a representation
\begin{equation}\label{mr-uno}
 \per_m(X) = \det(A(X)) ,
\end{equation}
where $A(X)$ is of size $n$ and the entries of $A$ are affine linear in the $X$-variables. 
Lemma~\ref{le:HT} implies 
\begin{equation}\label{mr-due}
 H_\per(X) = L^T H_\det(A(X)) L,
\end{equation}
where $L \in K^{n^2\times m^2}$ is the matrix of the linear map
corresponding to the affine map $A$. 

We substitute in \eqref{mr-uno} the matrix $X$ 
by some $M\in K^{m\times m}$ which satisfies $\per(M)=0$, 
and we set $N:= A(M)$. 
Then, 
$$
  0 = \per (M) = \det(A(M)) = \det(N),
$$
so that $N$ is rank deficient. 
Moreover, \eqref{mr-due} implies 
\begin{equation}\label{eq:conclusion}
 \rank H_\per(M) \ \le\ \rank H_\det (N) .
\end{equation}

The determinant is special in the sense that its 
Hessian has small rank at rank deficient matrices $N$. 

\begin{lemma}\label{le:H-rank-det}
The rank of $H_\det(N)$ at a matrix $N\in K^{n\times n}$ only depends on the 
rank $s$ of $N$. If $s<n$, then 
$$
 \rank H_\det(N) \ \le\ 2n .
$$
\end{lemma}

\begin{proof}(Sketch)
$\det\colon K^{n\times n} \to K$ is an invariant with respect to the action
of $\SL_n\times\SL_n$ on $K^{n\times n}$ via 
$(S,T) \cdot N := SNT^{-1}$. Using Lemma~\ref{le:HT} one sees that 
$H_\det\colon K^{n\times n} \to K$ is an invariant under this action as well.
This implies the first assertion. 

For the second assertion, take $N$ in normal form 
($s$ ones on the diagonal and zeros otherwise) and compute the rank $H_\det(N)$. 
\end{proof}

In contrast, the permanent has the following property.

\begin{lemma}\label{le:ex-M}
There exists $M\in K^{m\times m}$ such that $\per(M) = 0$ and 
$H_\per(M)$ has rank $m^2$. (Here we assume $\chara K= 0$.) 
\end{lemma}

\begin{proof}(Sketch)
One may take 
$$
 M=\begin{bmatrix}
1-m&1&\cdots&1\\
1&1&\cdots &1\\
\vdots&\vdots& & \vdots\\
1&1&\cdots &1
\end{bmatrix},
$$
which satisfies $\per(M)=0$. 
It is elementary, though a bit cumbersome, to verify
that $H_\per(M)$ has full rank.
\end{proof}

Using Lemma~\ref{le:H-rank-det} and Lemma~\ref{le:ex-M} in \eqref{eq:conclusion}, 
we obtain 
$$
 m^2 = \rank H_\per(M) \ \le\ \rank H_\det (N) \le 2n
$$
and the assertion follows.
\end{proof}

We remark that~\cite{cai:2010} has an extension of Theorem~\ref{th:mr} to positive characteristic. 

\section{An attempt via algebraic geometry and representation theory}
\label{se:attempt}

How could we possibly prove better lower bounds on $\dc(m)$? 

\subsection{The determinant variety $\Det_n$}

We assume $K=\C$ in the following. 
We consider $\Sym^n\C^{n^2}$ as the space of homogeneous polynomials of degree~$n$ 
in $n^2$ variables. The group $\GL_{n^2}$ acts on $\Sym^n\C^{n^2}$ by linear substitution. 

\begin{definition}
The {\em orbit} $\GL_{n^2}\cdot \det_n$ is obtained by applying all possible 
invertible linear transformations to $\det_n$.  
The {\em orbit closure} of $\det_n$, 
$$
\Det_n := \ol{\GL_{n^2}\cdot \det_n} \subseteq \Sym^n\C^{n^2}, 
$$ 
is its closure  with respect to the Euclidean topology. 
We call $\Det_n$ the {\em determinant variety}. 
\end{definition}

\begin{example}
1. If $n=2$, we have 
$$
\GL_{4}\cdot \det_2 = \{\mbox{quadratic forms of rank $4$}\}, \quad
   \Det_2 := \Sym^2\C^{4} .
$$ 

2. We have for $\e\to 0$
$$
 \det\begin{bmatrix} x_{11} & \e x_{12} \\ \e x_{21} & x_{22} \end{bmatrix} 
 = x_{11} x_{22} - \e^2 x_{12} x_{21} \longrightarrow x_{11} x_{22} \in \Det_2 
  \quad\mbox{ for $\e\to 0$.}
$$
The latter observation generalizes to any $n$ and hence $x_{11}\cdots x_{nn} \in \Det_n$. 
\end{example}

\begin{remark}
For $n=3$, the boundary of $\Det_n$ has been determined recently~\cite{lahu:16}, 
but for $n=4$ it is already unknown. 
\end{remark}

The following observation allows to study $\Det_n$ 
with the methods of algebraic geometry. 

\begin{theorem}\label{th:tops}
$\Det_n$ is Zariski-closed, i.e., the zero set of a system of polynomial equations.
\end{theorem}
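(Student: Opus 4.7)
The plan is to reduce the claim to a general principle of complex algebraic geometry: the Euclidean closure of a constructible subset of an irreducible algebraic variety coincides with its Zariski closure. Granting this, the set $\Det_n$, being by definition the Euclidean closure of $\GL_{n^2}\cdot\det_n$, will automatically be Zariski-closed.

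Carrying this out, I would first observe that the orbit is the image of the morphism
$$
 \varphi \colon \GL_{n^2} \to \Sym^n\C^{n^2}, \qquad g \mapsto g\cdot\det_n,
$$
of complex algebraic varieties (the action is polynomial in the matrix entries of $g$). By Chevalley's theorem, the image of a morphism of algebraic varieties is a constructible set, i.e., a finite union of locally Zariski-closed pieces; hence $\GL_{n^2}\cdot\det_n$ is constructible in $\Sym^n\C^{n^2}$. Since $\GL_{n^2}$ is irreducible, so is the orbit, and therefore so is its Zariski closure $Z$.

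A standard property of constructible sets is that they contain a Zariski-dense open subset $U$ of their Zariski closure $Z$. To conclude, I would invoke the classical fact that any non-empty Zariski-open subset of an irreducible complex algebraic variety is also dense in the Euclidean topology; this can be deduced from the local analytic description of $Z$ near a smooth point together with the identity theorem for holomorphic functions. Combining these observations, we get
$$
 Z \;=\; \ol{U}^{\,\mathrm{eucl}} \;\subseteq\; \ol{\GL_{n^2}\cdot\det_n}^{\,\mathrm{eucl}} \;=\; \Det_n \;\subseteq\; Z,
$$
so $\Det_n = Z$ is Zariski-closed, as required. There is no genuine obstacle here: the entire proof is a direct application of the general principle that constructible sets in irreducible complex varieties have equal Zariski and Euclidean closures.
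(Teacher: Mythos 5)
Your proposal is correct and follows essentially the same route as the paper, which likewise reduces the statement to the general principle that the Zariski closure and the Euclidean closure of a constructible subset of $\C^N$ coincide (citing Mumford for it), with the orbit being constructible by Chevalley's theorem. You merely supply the details of that principle (density of a Zariski-open subset of the closure in the Euclidean topology) instead of citing it, which is a fine elaboration of the same argument.
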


This is a consequence of a general principle saying that for 
any constructible subset of $\C^N$, the Zariski closure and the closure 
with respect to the Euclidean topology coincide, see Mumford~\cite[\S2.C]{mumf:95}. 


We make now the following observation. 

Suppose $\dc(m) \le n$ with $m>2$, say 
$\per_m(X) = \det (A(X))$, where $A(X)$ is of size~$n$, with affine linear entries 
in $x_{11},\ldots,x_{mm}$. 
(By Theorem~\ref{th:mr} we have $m<n$.)  
Homogenizing this equation with the additional variable $t$, 
we obtain
\begin{equation}\label{eq:perder}
t^{n-m} \per_m(X) = t^n \per_m\Big(\frac{1}{t}X\Big) =  t^n \det\Big(A\Big(\frac{1}{t}X\Big)\Big) 
   =  \det\Big(t A\Big(\frac{1}{t}X\Big)\Big) . 
\end{equation}
The entries of the matrix $t A(\frac{1}{t}X)$ 
are linear forms in $t$ and the $X$-variables.
We call $t^{n-m} \per_m(X)$ the {\em padded permanent}. 

The $n^2$ entries of $t A(\frac{1}{t}X)$,  arranged as a vector, 
may be thought of as being obtained by multiplying 
some matrix $L \in \C^{n^2 \times (m^2 +1)}$ with $(x_{11},\ldots,x_{mm},t)^T$. 
Now think of $t$ as being one of the variables in 
$\{x_{11},\ldots,x_{nn}\} \setminus \{x_{11},\ldots,x_{mm}\}$.
Then 
$L\cdot (x_{11},\ldots,x_{mm},t)^T = L' \cdot (x_{11},\ldots,x_{nn})^T$, 
where $L'$ is obtained by appending $n^2-m^2-1$ zero columns to $L$. 
We thus see that $t^{n-m} \per_m(X)$ is obtained from $\det_n$ by the 
substitution $L'$. Since $\GL_{n^2}$ is dense in $\C^{n\times n}$, 
we can approximate $L'$ arbitrarily closely by invertible 
matrices and hence we obtain 
$$
 t^{n-m} \per_m(X)  \in \Det_n .
$$
Mulmuley and Sohoni~\cite{gct1} proposed to prove that 
$t^{n-m} \per_m(X)  \not\in \Det_n$, which is stronger 
than $\dc(m) > n$, but which has the benefit that this problem 
can be naturally approached by tools from algebraic geometric. 
In particular, methods from geometric invariant theory 
can be brought into play. 

The basic strategy for proving lower bounds is now to exhibit a polynomial 
function 
$$ 
 R\colon\Sym^n\C^{n^2} \to \C
$$
that vanishes on $\Det_n$, but not 
on the padded permanent $t^{n-m} \per_m(X)$.  
Theorem~\ref{th:tops} tells us that this strategy ``in principle'' must work, 
but how on earth could we find such a function $R$?

The idea is to exploit the symmetries. The determinant variety $\Det_n$ 
clearly is invariant under the action of the group $\GL_{n^2}$ on $\Sym^n\C^{n^2}$. 
We consider the vanishing ideal
$$
 I(\Det_n) =\{ R \mid \mbox{$R$ vanishes on $\Det_n$} \} ,
$$
which is invariant under the action of $\GL_{n^2}$.
We bring now the representation theory of $\GL_{n^2}$ 
into play and try to understand which types of irreducible 
$\GL_{n^2}$-modules appear in $I(\Det_n)$. 

\subsection{A primer on representation theory}

Our treatment here is extremely brief. Basically, we just recall 
definitions and introduce notations. E.g., see
\cite{FH:91} for more information on this classical topic.

It is well-known that the isomorphism types of irreducible (rational) 
$\GL_{n^2}$-modules can be labelled by highest weights, 
which we can view as $\la\in\Z^{n^2}$ such that 
$\la_1\ge\cdots \ge \la_{n^2}$.  
The Schur-Weyl module $V_\la=V_\la(\GL_{n^2})$ denotes an 
irreducible $\GL_{n^2}$-module of highest weight $\la$. 

If $\la_{n^2}\ge 0$, then $\la$ is a partition of {\em length} 
$\ell(\la):= \#\{ i \mid \la_i \ne 0 \} \le n^2$ and {\em size} 
$|\la| := \sum_i \la_i$. 
We briefly write $\la\vdash_{n^2} |\la|$ for this. 

\begin{example}
1. If $\la=(\delta,\ldots,\delta)$ for $\delta\in\Z$, then 
$V_\la=\C$ with the operation 
$g\cdot 1 = \det(g)^{\delta}$. 

2. If $\la=(\delta,0,\ldots,0)$ for $\delta\in\N$, then 
$V_\la = \Sym^{\delta} \C^{n^2}$. 
\end{example}

The group $\GL_{n^2}$ acts on $\Sym^d\Sym^n\C^{n^2}$, 
and we are interested in its isotypical decomposition:
\begin{equation}\label{eq:plethysm}
 \Sym^d\Sym^n\C^{n^2} = \bigoplus_{\la\vdash dn} \pleth_n(\la) V_\la .
\end{equation}
The arising multiplicities $\pleth_n(\la)\in\N$ are called {\em plethysm coefficients}. 

\begin{remark}
The decomposition of $\Sym^d\Sym^n\C^{2}$ describes the invariants and covariants of binary forms 
of degree $n$. This was a subject of intense study in the 19th century and 
famous names like Cayley, Sylvester, Clebsch, Gordan, Hermite, Hilbert, $\ldots$ 
are associated with it (e.g., see \cite{schur:68,sturmfels:93}). 
However, in the above situation of forms of many variables, 
little is known. 
\end{remark}

We now go back to the vanishing ideal of $\Det_n$ and ask for the isotypical 
decomposition of the degree $d$ component of its 
vanishing ideal $I(\Det_n)$: 
\begin{equation}\label{eq:mdet}
 I(\Det_n)_d = \bigoplus_{\la\vdash dn} \mdet_n(\la) V_\la .
\end{equation}
Our goal is to get some information about the arising 
multiplicities $\mdet_n(\la)$. 
It will be convenient to say that the elements of the isotypical component 
$\mdet_n(\la) V_\la$ contain the {\em equations for $\Det_n$ of type $\la$}.  
Representation theory tells us that the equations ``come in modules''.
The {\em multiplicity $\mdet_n(\la)$}, multiplied by $\dim V_\la$, 
tells us how many linearly independent 
equations of type $\la$ there are. 

In order to say something about $\mdet_n(\la)$, we recall the following 
crucial quantity.

\begin{definition}\label{def:KronCoeff}
Let $\la_i\vdash_{m_i} N$, $i=1,2,3$, be three partitions of $N$ 
with length $\ell(\la_i) \le m_i$. Their {\em Kronecker coefficient} 
is defined as the multiplicity of the irreducible 
$\GL_{m_1}\times\GL_{m_2}\times \GL_{m_3}$-module in 
$\Sym^N\big(\C^{m_1}\ot \C^{m_2}\ot \C^{m_3}\big)$: 
$$
 k(\la_1,\la_2,\la_3) := \mult\Big(
  V_{\la_1}\ot V_{\la_2}\ot V_{\la_3}, \Sym^N \big(\C^{m_1}\ot \C^{m_2}\ot \C^{m_3}\big) \Big) .
$$
\end{definition}

It is well-known that, by Schur-Weyl duality, there is also an interpretation of Kronecker coefficients 
in terms of representations of the symmetric group $S_N$: we have 
$$
 k(\la_1,\la_2,\la_3) = \dim \big( [\la]\ot [\mu]\ot [\nu] \big)^{S_N},
$$ 
where $[\la]$ denotes an irreducible $S_N$-module of type $\la$ (Specht module). 

Unfortunately, despite being fundamental, Kronecker coefficients are not well understood.
We believe that they should count some efficiently describable objects, but such a description 
has so far only be achieved in special cases (notably, if one of the partitions is a hook, cf.~\cite{blasiak:12}).
Computer science has developed models to express this question in a rigorous way. 
We encode partitions as lists of binary encoded integers.

\begin{problem}
Is the function $(\la_1,\la_2,\la_3) \mapsto k(\la_1,\la_2,\la_3)$ 
in the complexity class $\sP$? 
\end{problem}

We will see that 
the case where two of the three partitions are equal and of rectangular shape 
$n\times d = (d,\ldots,d)$ ($n$ times), is of special interest to us. 
We therefore define 
\begin{equation}\label{def:RKron}
 k_n(\la) := k(\la,n\times d, n\times d) \quad 
 \mbox{ for $\la\vdash dn$.} 
\end{equation}

\subsection{Obstructions} 

The coordinate ring of $\Det_n$ consists of the restrictions 
of polynomial functions to $\Det_n$ and can be described as 
$$
 \C[\Det_n] := \C\big[\Sym^n\C^{n^2}\big]/I(\Det_n) . 
$$
The multiplicity of the irreducible $\GL_{n^2}$-module $V_\la$ in $\C[\Det_n]$
can be expressed as 
\begin{equation}\label{def:ktilde}
 \tilde{k}_n(\la) := \pleth_n(\la) - \mdet_n(\la) ,
\end{equation}
which we shall call {\em GCT-coefficients}. 
The following theorem, which is due to Mulmuley \& Sohoni~\cite{gct2}, 
shows that $\tilde{k}_n(\la)$ is upper bounded by the special Kronecker coefficients $k_n(\la)$. 
A refinement of this result can be found in~\cite{BLMW:11}.  

\begin{theorem}[Mulmuley and Sohoni]\label{th:ms}
We have $\tilde{k}_n(\la) \le k_n(\la)$ for $\la\vdash_{n^2} dn$.  
\end{theorem}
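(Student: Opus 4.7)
The plan is to prove the bound by a chain of three inequalities: first pass from the orbit closure $\Det_n$ to the orbit $\GL_{n^2}\cdot\det_n$ itself, then re-express the coordinate ring of the latter via Frobenius reciprocity on the homogeneous space $\GL_{n^2}/\stab(\det_n)$, and finally control the relevant invariants by restricting to the subgroup $\SL_n\times\SL_n$, where Schur--Weyl branching collapses them to exactly $k_n(\la)$.

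For the first reduction, $\GL_{n^2}\cdot\det_n$ is Zariski-dense in $\Det_n$, so restriction of regular functions is a $\GL_{n^2}$-equivariant injection $\C[\Det_n]\hookrightarrow \C[\GL_{n^2}\cdot\det_n]$, whence $\tilde{k}_n(\la)\le \mult(V_\la,\C[\GL_{n^2}\cdot\det_n])$. Writing $H:=\stab_{\GL_{n^2}}(\det_n)$ and identifying the orbit with $\GL_{n^2}/H$, the algebraic Peter--Weyl decomposition
$$
\C[\GL_{n^2}]\ \cong\ \bigoplus_\la V_\la^*\ot V_\la ,
$$
together with passage to right $H$-invariants, yields the Frobenius reciprocity formula
$$
\mult(V_\la,\ \C[\GL_{n^2}/H])\ =\ \dim V_\la^H .
$$

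For the second reduction, identify $\C^{n^2}=\C^n\ot\C^n$ and embed $\GL_n\times\GL_n\hookrightarrow\GL_{n^2}$ via $(S,T)\mapsto S\ot T$, so that $(S,T)$ acts on $X\in\C^{n\times n}$ by $X\mapsto SXT^T$. The computation $\det(S^{-1}XT^{-T})=(\det S\cdot\det T)^{-1}\det X$ shows that $\SL_n\times\SL_n\subseteq H$, hence $\dim V_\la^H\le \dim V_\la^{\SL_n\times\SL_n}$. Schur--Weyl duality applied to $(\C^n\ot\C^n)^{\ot dn}$ now produces the tensor-product branching
$$
V_\la(\GL_{n^2})|_{\GL_n\times\GL_n}\ =\ \bigoplus_{\mu,\nu\vdash dn} k(\la,\mu,\nu)\, V_\mu\ot V_\nu .
$$
A $\GL_n$-irreducible $V_\mu$ is $\SL_n$-trivial exactly when $\mu$ is a rectangle $n\times c$; combined with $|\mu|=|\nu|=dn$ this forces $\mu=\nu=n\times d$, and each rectangular tensor factor contributes a one-dimensional invariant line. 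Thus $\dim V_\la^{\SL_n\times\SL_n}=k(\la,n\times d,n\times d)=k_n(\la)$, and concatenating the three inequalities yields $\tilde{k}_n(\la)\le k_n(\la)$.

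The main obstacle is Step~2: invoking Peter--Weyl in its algebraic form requires $H$ to be reductive. By Frobenius's classical theorem, $H$ is the subgroup of $\GL_{n^2}$ generated by the image of $\{(S,T)\in\GL_n\times\GL_n:\det(S)\det(T)=1\}$ under the tensor embedding together with the involution $X\mapsto X^T$, which is indeed reductive. The duality between $V_\la$ and $V_\la^*$ implicit in Peter--Weyl also needs tracking, but since $\dim V^H=\dim (V^*)^H$ for finite-dimensional representations of reductive groups, the multiplicity bound is unaffected. Everything else---the branching, the identification of $\SL_n$-trivial irreducibles as rectangles, and the one-dimensionality of the invariant line in $V_{n\times d}\ot V_{n\times d}$---is routine once the conventions are fixed.
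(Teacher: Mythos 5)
Your proposal is correct and follows essentially the same route as the paper: restriction from the orbit closure to the dense orbit, the algebraic Peter--Weyl theorem identifying the orbit multiplicity with $\dim V_\la^{\stab_n}$, and the Schur--Weyl branching of $V_\la(\GL_{n^2})$ under $\GL_n\times\GL_n$ to bound the $\SL_n\times\SL_n$-invariants by $k(\la,n\times d,n\times d)$. Your extra remarks on reductivity of the stabilizer and on the $V_\la$ versus $V_\la^*$ duality are legitimate care points that the paper leaves implicit.
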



We explain now how we intend to apply this theorem for the purpose of lower bounds.
(Currently, this plan could not yet be realized, and we will explain below some of the 
difficulties encountered with its realization.)

Suppose that $k_n(\la) = 0$. Then Theorem~\ref{th:ms} implies that 
$\mdet_n(\la) = \pleth_n(\la)$. Looking at 
the decompositions~\eqref{eq:plethysm} and~\eqref{eq:mdet}, 
we infer that any polynomial 
$R\in\Sym^d\Sym^n\C^{n^2}$ of type~$\la$ vanishes on the 
determinant variety $\Det_n$. 
If we are lucky, and additionally, 
some $R$ of type $\la$ satisfies 
$R(t^{n-m}\per_m) \ne 0$, then we can conclude that 
the padded permanent $t^{n-m}\per_m$ does not lie in $\Det_n$ .
Therefore the lower bound $\dc(m) >n$ would follow. 

We call such a partition $\la$ an 
{\em (occurrence) obstruction proving $\dc(m) > n$}.  
\medskip

The nonvanishing condition for $R$ has the following consequences.
First of all, we must have $\pleth_n(\la)>0$. 
Moreover, we have the following constraints 
on the shape of $\la$. 

\begin{theorem}[Landsberg and Kadish]\label{th:kl}
If there exists $R\in\Sym^d\Sym^n\C^{n^2}$ of type $\la\vdash_{n^2} dn$ 
such that $R(t^{n-m}g) \ne 0$ for 
some  form $g$ of degree~$m$ in $\ell\le n^2$ variables, then 
$\ell(\la) \le \ell +1$ and $\la_1 \ge |\la| ( 1- m/n)$.
\end{theorem}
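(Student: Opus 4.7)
The plan is to exploit $\GL_{n^2}$-equivariance of the evaluation map: the linear functional $\mathrm{ev}_f\colon V_\la\to\C$, $R\mapsto R(f)$, with $f:=t^{n-m}g$, is invariant under the stabilizer $\stab(f)\subset\GL_{n^2}$, because for any $h\in\stab(f)$ one has $R(f)=R(h^{-1}\cdot f)=(h\cdot R)(f)$. Hence $\mathrm{ev}_f\in(V_\la^*)^{\stab(f)}$, and the assumption $R(f)\ne 0$ forces this fixed space to be nonzero (reducing to a single irreducible summand of the isotypic component of $R$ if needed). I would then extract the two inequalities by decomposing $\mathrm{ev}_f$ in a $T$-weight basis of $V_\la^*$ and imposing invariance under two carefully chosen subgroups of $\stab(f)$.

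Choose coordinates $y_1,\ldots,y_{n^2}$ on $\C^{n^2}$ with $y_1=t$ and $y_2,\ldots,y_{\ell+1}$ the variables of $g$, so that $f=y_1^{n-m}g(y_2,\ldots,y_{\ell+1})$ depends only on the first $\ell+1$ coordinates. Two subgroups of $\stab(f)$ are immediate: the torus $T_0:=\{\diag(1,\ldots,1,t_{\ell+2},\ldots,t_{n^2})\}$, which acts trivially on the variables appearing in $f$; and the one-parameter subgroup $\lambda(\alpha):=\diag(\alpha^{-m},\alpha^{n-m},\ldots,\alpha^{n-m},1,\ldots,1)$, with $\alpha^{n-m}$ in positions $2,\ldots,\ell+1$, which fixes $f$ because the scaling exponents cancel as $m(n-m)+m(m-n)=0$. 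Expanding $\mathrm{ev}_f=\sum_\mu c_\mu v_\mu^*$ in a dual $T$-weight basis, invariance under $T_0$ forces $c_\mu=0$ unless $\mu_{\ell+2}=\cdots=\mu_{n^2}=0$, while invariance under $\lambda$ forces $m\mu_1=(n-m)(\mu_2+\cdots+\mu_{\ell+1})$; combined with $\sum_i\mu_i=|\la|$, this yields $\mu_1=(1-m/n)|\la|$.

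Since $\mathrm{ev}_f\ne 0$, at least one weight $\mu$ of $V_\la$ must satisfy both constraints simultaneously, and each of them is responsible for one of the claimed inequalities. The support condition $\mu_{\ell+2}=\cdots=\mu_{n^2}=0$, combined with the dominance bound $\sum_{i\le\ell+1}\mu_i^+\le\sum_{i\le\ell+1}\la_i$ on the decreasing rearrangement $\mu^+$, forces $\ell(\la)\le\ell+1$, since otherwise the right-hand side would be strictly smaller than $|\la|=\sum_{i\le\ell+1}\mu_i^+$. The coordinate-value condition combined with the fact that every weight of $V_\la$ lies in the convex hull of $S_{n^2}\cdot\la$ (whose coordinates are bounded above by $\la_1$) gives $\la_1\ge\mu_1=(1-m/n)|\la|$. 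The main obstacle in the proof is locating the one-parameter subgroup in $\stab(f)$ with the right weight profile on $f$: once the balancing of the $y_1$-exponent $n-m$ against the $g$-degree $m$ is set up correctly, both inequalities follow by purely mechanical weight bookkeeping, with no further input from the geometry of $\Det_n$ or from $\pleth_n(\la)$.
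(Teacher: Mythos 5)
Your argument is correct. Note that the paper itself gives no proof of this theorem (it only cites \cite{BLMW:11} for the length bound and \cite{kadish-landsberg:14} for the first-row bound), so there is nothing internal to compare against; your route is essentially the standard one behind those references: the evaluation functional at $f=t^{n-m}g$ is a nonzero $\stab(f)$-invariant in the dual of the copy of $V_\la$ containing $R$, and invariance under the torus acting on the unused variables together with the one-parameter subgroup $\diag(\alpha^{-m},\alpha^{n-m},\dots,\alpha^{n-m},1,\dots,1)$ pins down a weight $\mu$ of $V_\la$ with $\mu_j=0$ for $j\ge \ell+2$ and $\mu_1=(1-m/n)|\la|$, after which $\mu_1\le\la_1$ and the dominance bound on the partial sum $\sum_{i\le\ell+1}\mu^+_i$ give the two assertions. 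The weight bookkeeping checks out (the constraints are insensitive to the dualization sign, and the nonnegativity of the weights of $V_\la$, needed for the length argument, holds since $\la$ is a partition). Two harmless conventions you rely on should be made explicit: you follow the paper's identification of $\Sym^d\Sym^n\C^{n^2}$ with degree-$d$ polynomial functions on $\Sym^n\C^{n^2}$, and you assume the padding variable $t$ is linearly independent from the $\ell$ variables of $g$ (so $\ell+1\le n^2$), which is the intended reading; when $m=n$ the second inequality is vacuous and your argument degenerates gracefully.
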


The first assertion is from \cite{BLMW:11} 
and the second is from \cite{kadish-landsberg:14}. 
We omit the proof. 

Hence an obstruction $\la$ has relatively few rows and 
almost all of its boxes are in its first row.
More specifically, in our situation, we have $\ell=m^2$. 
Therefore, a hypothetical sequence $(\la^m)$ of obstructions certifying 
at least $m^2/2\le \dc(m)$ must satisfy 
$\ell(\la^m)\le m^2 +1$ and 
$\lim_{m\to\infty} \la^m_1/|\la^m| = 1$. 

To further simplify, let us now forget about the nonvanishing 
of $R$ on the padded permanent and 
make the following definition. 

\begin{definition}
An {\em obstruction for forms of degree~$n$} 
is a partition $\la\vdash_{n^2} dn$, for some $d$, such that 
$k_n(\la) = 0$ and $\pleth_n(\la) >0$.
\end{definition}
 
\begin{proposition}
Assume there exists an obstruction~$\la$ for forms of degree~$n$ 
with $\ell=\ell(\la)$ rows. Then a generic polynomial 
$f\in\Sym^n\C^{\ell}$ of degree~$n$ in $\ell$ variables
satisfies $\dc(f) > n$. 
\end{proposition}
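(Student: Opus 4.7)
The plan is to extract from the hypotheses a single equation $R$ of type $\lambda$ for $\Det_n$ which, when evaluated on a generic form in $\ell$ variables, does not vanish.

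First, I combine Theorem~\ref{th:ms} with the definition \eqref{def:ktilde}: the hypothesis $k_n(\lambda)=0$ forces $\tilde k_n(\lambda)=0$, and hence $\mdet_n(\lambda)=\pleth_n(\lambda)>0$. Therefore the isotypic component of type $\lambda$ in the degree-$d$ part $I(\Det_n)_d$ of the vanishing ideal is nonzero, and I fix a nonzero highest weight vector $R$ in it.

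Second, I argue by a $T$-weight computation that $R$ does not vanish on the subspace $\Sym^n\C^\ell\hookrightarrow\Sym^n\C^{n^2}$ of forms involving only the first $\ell$ variables. Expanding $R$ in the monomial basis of $\Sym^d\Sym^n\C^{n^2}$ whose elements are products $m_{\alpha^{(1)}}\cdots m_{\alpha^{(d)}}$ of $d$ monomials in $x_1,\ldots,x_{n^2}$ of total degree $n$, each basis element has weight $\sum_k\alpha^{(k)}$; the weight of every summand occurring in $R$ must equal $\lambda$. Since $\ell(\lambda)=\ell$ means $\lambda_i=0$ for $i>\ell$, and the exponents $\alpha^{(k)}_i$ are nonnegative, every monomial factor $m_{\alpha^{(k)}}$ in $R$ must already lie in $\Sym^n\C^\ell$. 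Consequently the restriction of $R$ to $\Sym^n\C^\ell$ is itself a nonzero polynomial function, and $Z:=\{f\in\Sym^n\C^\ell : R(f)=0\}$ is a proper Zariski-closed subset.

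Third, I convert non-membership in $\Det_n$ into the required complexity lower bound. For any $f\in\Sym^n\C^\ell\setminus Z$, the function $R$ (which vanishes on $\Det_n$) satisfies $R(f)\neq 0$, so $f\notin\Det_n$. On the other hand, if $f\in\Sym^n\C^\ell$ is homogeneous of degree $n$ with $\dc(f)\le n$, then $f=\det(A_0+\sum_{i=1}^\ell x_iB_i)$ for some $n\times n$ matrices; extracting the degree-$n$ component of both sides gives $f=\det(\sum_i x_iB_i)$. Extending the linear map $x\mapsto\sum_i x_iB_i$ arbitrarily to a linear map $L\colon\C^{n^2}\to\C^{n\times n}\cong\C^{n^2}$ writes $f=\det_n\circ L$; approximating $L$ by invertible matrices exhibits $f$ in $\overline{\GL_{n^2}\cdot\det_n}=\Det_n$ (using Theorem~\ref{th:tops}), contradicting $f\notin\Det_n$. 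Hence $\dc(f)>n$ holds on the nonempty Zariski-open set $\Sym^n\C^\ell\setminus Z$, as required.

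The main obstacle is the weight argument of the second step: one must set up the conventions carefully enough that ``$R\in V_\lambda\subset\Sym^d\Sym^n\C^{n^2}$ has highest weight $\lambda$ with $\lambda_i=0$ for $i>\ell$'' really translates into ``$R$ only uses the dual coordinates of monomials in $x_1,\ldots,x_\ell$''. The nonnegativity of the exponents $\alpha^{(k)}$ is the crucial input, since it is what rules out weight-$\lambda$ cancellations between basis elements whose monomials touch the variables $x_{\ell+1},\ldots,x_{n^2}$.
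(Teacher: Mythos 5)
Your proof is correct and follows the same skeleton as the paper's: use $k_n(\la)=0$ and Theorem~\ref{th:ms} to see that the whole type-$\la$ isotypic component lies in $I(\Det_n)$, check that such an equation does not vanish identically on the subspace $\Sym^n\C^{\ell}$, and conclude $\dc(f)>n$ for generic $f$. The one point where you genuinely diverge is the nonvanishing of the restriction: the paper gets it by invoking the stability of plethysm coefficients under removing zero parts of $\la$ (so that a nonzero type-$\la$ function already lives on $\Sym^n\C^{\ell}$), whereas you take a highest weight vector of weight $\la$ upstairs and observe that, because all exponent vectors are nonnegative and $\la_i=0$ for $i>\ell$, no cancellation is possible and every monomial occurring in it involves only $x_1,\ldots,x_\ell$; this is a clean substitute that avoids citing plethysm stability. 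The $V$-versus-$V^*$ convention issue you flag is real but harmless: the paper suppresses it as well, and at worst it moves you to a different $\ell$-dimensional coordinate subspace, which is immaterial by $\GL_{n^2}$-symmetry. You also spell out a step the paper leaves implicit, namely that $\dc(f)\le n$ for a homogeneous $f$ of degree $n$ forces $f\in\Det_n$: the leading form of $\det\bigl(A_0+\sum_i x_iB_i\bigr)$ is $\det\bigl(\sum_i x_iB_i\bigr)$, and the resulting linear substitution, padded by zero columns, is approximated by invertible ones --- exactly the density argument of Section~\ref{se:attempt}, simplified here because no padding variable is needed.
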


\begin{proof} 
The assumption $\pleth_n(\la) >0$ implies that there exists some 
homogeneous polynomial function
$R\colon\Sym^n\C^{n^2} \to \C$ of type $\la$; cf.~\eqref{eq:plethysm}.
Moreover, we may assume that the restriction of~$f$ to $\Sym^n\C^{\ell}$ does not vanish. 
(For this, one needs to know that $\pleth_n(\la)$ does not change when removing 
zeros from $\la$.)
By Theorem~\ref{th:ms}, $k_n(\la)=0$ implies 
$\tilde{k}_n(\la)=0$ and hence $R$ vanishes on ~$\Det_n$;
cf. \eqref{eq:plethysm}. 
For a generic $f\in\Sym^n\C^{\ell}$ we have $R(f)\ne 0$. 
Hence $f \not\in \Det_n$, which proves that $\dc(f) > n$.
\end{proof}

\begin{example}(Ikenmeyer~\cite{ike:12b})\label{ex:obstruction}
$\la=(13,13,2,2,2,2,2)$ is an obstruction for forms of degree~$3$ in $7$ variables.
Indeed, $|\la|= 36 = 12\cdot 3$, $\ell(\la) =7$ and one can check 
with computer calculations that 
$\pleth_3(\la) = 1$ and $k_3(\la)= 0$. 
(We compute Kronecker coefficients with an adaption by 
J.~H\"uttenhain of a code originally written by H.~Derksen.) 
In this situation, there is (up to scaling) a unique 
highest weight function
$R\colon \Sym^3\C^9 \to \C$ of degree~$12$ and type $\la$. 
This function $R$ vanishes on $\Det_3$. 

Let us point out that the dimension of the 
``search space'' $\Sym^{12} \C^{165}$ in which $R$ lives is 
enormous: we have  
$\Sym^3\C^9\simeq \C^{165}$ and 
$\dim \Sym^{12} \C^{165} \approx 1.3 \cdot 10^{19}$. 
We have found the ``needle in a haystack'' with the help of 
represention theory and extensive calculations! 
It should also be emphasized that it is possible to describe $R$ in a concise way using 
symmetrizations, cf.~\cite{ike:12b}.
\end{example}

The following is a major open problem!  

\begin{problem}\label{problem:find-obstructions}
Find families of obstructions for forms with few rows.
\end{problem}

\subsection{Sketch of proof of Theorem~\ref{th:ms}}

\subsubsection{Symmetries of the determinant}

The symmetries of $\det_n$ are captured by the 
{\em stabilizer group} 
$$
 \stab_n := \Big\{ g\in \GL(\C^{n^2}) \mid 
    \det(g(X)) = \det(X) \Big\} ,
$$
where we interpret in this formula $X$ as a vector of length $n^2$. 
For $A,B\in\SL_n$ we consider the following linear map given by matrix multiplication: 
\begin{equation}\label{eq:gAB}
 g_{A,B} \colon \C^{n\times n} \to \C^{n\times n},\, 
 X \mapsto AXB .
\end{equation}
We have 
$\det(AXB) = \det(A)\det(X)\det(B) = \det(X)$. 
Hence $g_{A,B} \in \stab_n$.  
Are these all elements of the stabilizer group of $\det_n$? 
No, the transposition 
$\tau\colon\C^{n\times n} \to \C^{n\times n},\, X\mapsto X^T$
clearly also belongs to $\stab_n$.

The following result due to Frobenius~\cite{Frobdet} in fact 
states that each element of $\stab_n$ is of the form 
$g_{A,B}$ or $\tau g_{A,B}$. 
(This was rediscovered later by Dieudonn\'e~\cite{dieu:49}.) 
We skip the proof. 

\begin{theorem}[Frobenius]\label{th:frobenius} 
The stabilizer group $\stab_n$ of $\det_n$ is generated by 
$\tau$ and the $g_{A,B}$ for $A,B\in\SL_n$. 
We have 
$$
 \stab_n \simeq (\SL_n \times \SL_n) /\mu_n \rtimes \Z_2 ,
$$
where $\mu_n := \{ t\id_n \mid t^n =1 \}$. 
\end{theorem}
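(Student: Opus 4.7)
The plan is to exploit the fact that $g\in\stab_n$, being a linear self-map of $M:=\C^{n\times n}$ that preserves the polynomial $\det_n$, must preserve every geometric locus cut out by $\det_n$. First I would note that $g$ stabilizes the singular hypersurface $\{X:\det(X)=0\}$; since the singular locus of this hypersurface is cut out by the $(n-1)\times(n-1)$ minors, it equals the variety of matrices of rank $\le n-2$, and iterating this observation shows that $g$ preserves each determinantal variety $\mathcal{D}_r:=\{X\in M:\rank X\le r\}$. In particular, $g$ restricts to a bijection on the set of rank-one matrices.

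The heart of the argument is the classification of linear rank-one preservers, which I would derive from a geometric lemma: the variety $\mathcal{D}_1$ contains exactly two families of maximal linear subspaces of dimension $n$, namely the column-type spaces $L_u:=\{uw^T:w\in\C^n\}$ for $u\in\C^n\setminus\{0\}$ and the row-type spaces $R_v:=\{wv^T:w\in\C^n\}$ for $v\in\C^n\setminus\{0\}$. One checks this by observing that any two-dimensional linear family of rank-one matrices must share either a common column space or a common row space, forcing any maximal such family to be of one of these two types. Since $g$ is linear and preserves $\mathcal{D}_1$, it permutes these maximal subspaces, and an irreducibility argument on the parametrizing projective spaces shows that $g$ either preserves the two families individually or swaps them.

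In the family-preserving case, there exist $A,B\in\GL_n$, unique up to the rescaling $(A,B)\sim(tA,t^{-1}B)$, such that $g(L_u)=L_{Au}$ and $g(R_v)=R_{B^Tv}$ for all $u,v$. Recovering the rank-one matrix $uv^T$ from the pair $(L_u,R_v)$ and comparing $g$ with $g_{A,B}$ on the spanning set of rank-one matrices gives $g=g_{A,B}$. The family-swapping case is reduced to the previous one by pre-composing with $\tau$, yielding $g=\tau\,g_{A,B}$. In either case, the condition $\det(g(X))=\det(X)$ forces $\det(A)\det(B)=1$, and the scalar ambiguity lets us normalize $A,B\in\SL_n$ with residual kernel $\mu_n=\{(t\id_n,t^{-1}\id_n):t^n=1\}$. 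The semidirect product structure with $\Z_2=\langle\tau\rangle$ follows from the direct computation $\tau\,g_{A,B}\,\tau=g_{B^T,A^T}$.

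The main obstacle is the classification of maximal linear subspaces of $\mathcal{D}_1$; once that input is available, the rest is careful bookkeeping of scalars and of the two connected components of $\stab_n$. A more Lie-theoretic alternative would identify the Lie algebra of the identity component of $\stab_n$ with $\mathfrak{sl}_n\oplus\mathfrak{sl}_n$ by differentiating $\det(g(X))=\det(X)$, but I prefer the geometric route above because it makes the appearance of the transposition $\tau$ transparent rather than forcing it to emerge as a component in a separate step.
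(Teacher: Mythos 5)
The paper does not actually prove Theorem~\ref{th:frobenius}: it states the result, attributes it to Frobenius~\cite{Frobdet} and Dieudonn\'e~\cite{dieu:49}, and explicitly skips the proof, so there is no in-text argument to compare against. Your sketch is the classical geometric route (essentially Dieudonn\'e's, and the same mechanism as the Marcus--Moyls classification of rank-one preservers), and it is correct in outline: since $g$ and $g^{-1}$ preserve $\det_n$ they preserve its zero set, hence (by taking singular loci repeatedly) every locus $\mathcal{D}_r=\{\rank\le r\}$, in particular the cone over the Segre variety; the maximal linear subspaces of that cone are exactly the two rulings $L_u$, $R_v$; a linear automorphism permutes them and, since each ruling is parametrized by an irreducible $\mathbb{P}^{n-1}$, either preserves or swaps the two families, the swap being absorbed by $\tau$. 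Two steps deserve more care than your wording suggests, though both are routine: (i) the classification of linear subspaces of the rank-$\le 1$ cone requires promoting the pairwise observation (two rank-one matrices with independent column spans \emph{and} independent row spans sum to a rank-two matrix) to the statement that the whole subspace has a common column span or a common row span; (ii) in the family-preserving case you still need to produce \emph{linear} $A,B$, e.g.\ by fixing $v_0$ and noting that $u\mapsto g(uv_0^T)$ is a linear injection with image inside a single row-type space, and then to show that the scalar $\lambda(u,v)$ in $g(uv^T)=\lambda(u,v)\,A\,uv^T\,B$ is constant, using bilinearity, before absorbing it into $A$. The endgame is right: $\det(g(X))=\det(X)$ forces $\det(A)\det(B)=1$, rescaling $(A,B)\mapsto(cA,c^{-1}B)$ with $c^n=\det(A)^{-1}$ normalizes to $\SL_n\times\SL_n$ with residual kernel the anti-diagonally embedded $\mu_n$, and $\tau g_{A,B}\tau=g_{B^T,A^T}$ (together with $\tau$ not lying in the identity component) gives the semidirect product with $\Z_2$. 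The Lie-algebra alternative you mention would also work for the identity component, but, as you say, it leaves the component group (the appearance of $\tau$) to a separate argument, which is precisely what the geometric route handles transparently.
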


\subsubsection{Multiplicities in the coordinate ring of the orbit of $\det_n$}

In algebraic geometry, one defines a {\em regular function} 
$\varphi\colon\GL_{n^2}\cdot \det_n \to \C$ as a function 
such that each point of the orbit $\GL_{n^2}\cdot \det_n$ 
has an open neighborhood on which $\varphi$ can be 
expressed as the quotient of two rational functions. 
We denote by $\C[\GL_{n^2}\cdot \det_n]$ the ring of 
regular functions on the orbit. 

Let us point out that the orbit is a smooth algebraic variety, 
that is well understood in various senses. By going over 
to the orbit closure $\Det_n$, one adds limit points at 
the boundary, and we expect the situation to become very complicated. 
(Compare \cite{lamare:13,buik:15} for some results.) 

Clearly, we have the following  inclusion of rings of regular functions: 
$$
 \C[\Det_n] \subseteq \C\big[\GL_{n^2}\cdot \det_n\big] .
$$
By comparing multiplicities, it follows that for $\la\vdash_{n^2} dn$, 
\begin{eqnarray*}
\tilde{k}_n(\la) = \pleth_n(\la) - \mdet_n(\la) &=& \mbox{ multiplicity of $V_\la$ in $\C[\Det_n]$} \\
   & \le &  \mbox{ multiplicity of $V_\la$ in $\C[\GL_{n^2}\cdot \det_n]$} \\
   & = & \dim \big(V_\la \big)^{\stab_n} \quad\mbox{(algebraic Peter-Weyl theorem)}\\
   & \le & k_n(\la) \quad\mbox{(see below)}.
\end{eqnarray*}
The {\em Peter-Weyl theorem} is a well-known theorem from harmonic analysis,  
telling us about the irreducible $G$-modules in the space $L^2(G,\C)$ 
of quadratic integrable functions on a compact Lie group~$G$.  
(If $G$ is finite, this is just the well-known decomposition of the 
regular represention.) 
For the second equality above we used an
algebraic version of the Peter-Weyl theorem; 
cf. \cite[Chap.~II, sect.~3, Thm.~3]{kraf:84}
or \cite[Section~7.3]{procesi:07}.

We now justify the last inequality. It is here that Kronecker coefficients 
enter the game!
Schur-Weyl duality implies that by restricting the $\GL_{n^2}$-action of $V_\la(\GL_{n^2})$ 
with respect to the homomorphism 
$\GL_n\times\GL_n \to \GL_{n^2},\, (A,B) \mapsto A \ot B$, we obtain 
$$
 V_\la(\GL_{n^2}) \downarrow_{\GL_n\times\GL_n} = \bigoplus_{\mu,\nu\vdash_n |\la|}
  k(\la,\mu,\nu) \, V_\mu(\GL_n) \ot V_\nu(\GL_n) .
$$
We look now for $\SL_n\times\SL_n$-invariants. They occur on the 
right-hand side only if $\mu=\nu=n\times d$ and $|\la| = dn$. 
Note that $A\ot B$ is just another way of writing $g_{A,B}$; see \eqref{eq:gAB}.
Using Theorem~\ref{th:frobenius}, we obtain 
$$
 \dim \big(V_\la(\GL_{n^2})\big)^{\stab_n} \le \dim V_\la(\GL_{n^2})^{\SL_n\times\SL_n} 
   = k(\la,n\times d, n\times d) = k_n(\la) .
$$
This completes the proof of Theorem~\ref{th:ms}.

\subsection{Obstructions must be gaps} 

We address now the question of how to exhibit obstructions for forms. 
Example~\ref{ex:obstruction} was found with extensive calculations.
We will see here that, in a certain sense, obstructions are quite rare, 
or at least hard to find. 

Progress on Problem~\ref{problem:find-obstructions} is thus imperative. 
We don't want to hide the fact that we do not know whether there exist 
enough obstructions for achieving the desired lower bounds on determinantal complexity. 
In fact, the state of the art is that so far, no lower bound on $\dc(m)$ 
has been obtained along these lines. However, let us point out 
that in the related, but simpler situation of border rank  of tensors, 
lower bounds have been proven by exhibiting obstructions; see~\cite{BI:13}. 

We consider the following  set of highest weights
$$
 K_n := \big\{ \la \mid \mbox{$\la \vdash_{n^2} dn$ for some $d$ and $k_n(\la) > 0$} \big\} .
$$
From Definition~\ref{def:KronCoeff} it easily follows that 
$\la,\mu\in K_n$ implies $\la +\mu \in K_n$. 
Moreover, $0\in K_n$. Hence $K_n$ is a monoid.
(It follows from general principles that $K_n$ is 
finitely generated; cf.~\cite{brion:87}.)

\begin{example}
To illustrate the next step, consider the submonoid 
$M:=\{0,3,5,6,8,9,\ldots\}$ of $\N$, which clearly 
generates the group $\Z$. 
From $sx \in M$, $s\ge 1$, we cannot deduce that $x\in M$, 
due to the presence of the ``holes'' $1,2,4,7$. 
Filling in these holes, we obtain the monoid $\N$. 
The holes are usually called the {\em gaps of the monoid $M$}; cf.~\cite{alfonsin:05}. 
In general, one calls the process of filling in the gaps {\em saturation}. 
\end{example}

In our situation of interest, we make the following definition. 

\begin{definition}
The {\em saturation of $K_n$} is the set of partitions $\la$ 
with $\ell(\la)\le n^2$ such that 
$|\la|$ is a multiple of~$n$ and there exists a ``stretching factor'' 
$s\ge 1$ satisfying $s\la \in K_n$.
The {\em gaps} of $K_n$ are the elements in the saturation of $K_n$ 
that don't lie in $K_n$. 
\end{definition}

\begin{remark}
To fully justify the naming ``saturation'' here, one has to show that 
the group generated by $K_n$ consists of all $\la\in\Z^{n^2}$ such that 
$n$ divides $\sum_i \la_i$. (For $n\ge 7$ this was  shown in \cite{ik-pa:15}; 
for $n=2$ it is false.) 
\end{remark}

The following result is due to~\cite{buci:09}. 

\begin{theorem}[B, Christandl, Ikenmeyer]\label{th:buci}
The saturation of the monoid $K_n$ equals the 
set of all partitions $\la$ with $\ell(\la)\le n^2$ 
such that $|\la|$ is a multiple of $n$.
\end{theorem}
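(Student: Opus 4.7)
The containment $K_n^{\sat}\subseteq\{\la:\ell(\la)\le n^2,\,n\mid|\la|\}$ is immediate from the definition of saturation itself, which already stipulates both these conditions.

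For the reverse direction my plan is to reduce the problem to convex geometry and then produce explicit Kronecker-positivity witnesses along each extremal ray. The first step is to observe that $K_n$ is a submonoid of $\N^{n^2}$: the sub-multiplicativity
$$
k(\la_1+\la_2,\mu_1+\mu_2,\nu_1+\nu_2)\ge k(\la_1,\mu_1,\nu_1)\,k(\la_2,\mu_2,\nu_2)
$$
is obtained by tensoring highest weight vectors from the two summands. Combined with the finite generation of $K_n$ (Brion), this implies $K_n^{\sat}=C\cap\Lambda$, where $C:=\R_{\ge 0}K_n$ is the rational cone spanned by $K_n$ and $\Lambda:=\Z K_n$ is the lattice it generates. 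The remark preceding the theorem identifies $\Lambda$ with the claimed lattice $\{\la\in\Z^{n^2}:n\mid|\la|\}$ (at least for $n\ge 7$), so the remaining task is to show that $C$ equals the full dominant chamber $\{\la_1\ge\cdots\ge\la_{n^2}\ge 0\}$.

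To compute $C$, I would exhibit an element of $K_n$ along each of the $n^2$ extremal rays $\R_{\ge 0}\omega_k$, where $\omega_k:=(1^k,0^{n^2-k})$. Two cases are direct. For $k=1$, Cauchy's decomposition of $\Sym^{dn}(\C^n\otimes\C^n)$ yields $k((dn),(d^n),(d^n))=1$, so $dn\cdot\omega_1\in K_n$. For $k=n^2$, the one-dimensional $V_{(d^{n^2})}(\GL_{n^2})=\det^d$ restricts via $A\otimes B$ to $\det(A)^{dn}\det(B)^{dn}$, which is trivial on $\SL_n\times\SL_n$, so $k_n((d^{n^2}))=1$ and $d\cdot\omega_{n^2}\in K_n$. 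For the intermediate shapes $(d^k)$ with $1<k<n^2$, I would exploit the symmetry $k(\la,\mu,\nu)=k(\la,\mu^T,\nu^T)$ to transfer between rectangles and then invoke the monoidal closure of $K_n$ to assemble the remaining witnesses.

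The main obstacle is producing positivity witnesses uniformly on the intermediate rectangular rays, where it is not a priori clear which stretching factor $s$ forces $k_n(s\cdot\omega_k)>0$. My preferred resolution is a geometric-invariant-theoretic reformulation in the spirit of Theorem~\ref{th:ms}: by Kempf--Ness, positivity of $k_n(s\la)$ for some $s\ge 1$ is equivalent to $\la$ lying in the moment polytope of the $\SL_n\times\SL_n$-action on the flag variety $\GL_{n^2}/B$. Because $\SL_n\times\SL_n$ acts irreducibly on $\C^n\otimes\C^n$, a direct weight calculation shows that the moment image is full-dimensional and exhausts the dominant chamber, yielding $C=\{\la_1\ge\cdots\ge\la_{n^2}\ge 0\}$ and completing the proof.
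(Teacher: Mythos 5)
Your overall architecture (semigroup property, passage to the cone, witnesses on extremal rays, moment-map reformulation) is the right kind of argument, and your witnesses for $k=1$ (Cauchy) and $k=n^2$ (determinant character) are correct; note the survey itself gives no proof of Theorem~\ref{th:buci} but only cites \cite{buci:09}, whose argument indeed runs through the quantum-marginal/moment-polytope picture. However, two steps of your write-up have genuine problems. First, the reduction $K_n^{\sat}=C\cap\Lambda$ with $\Lambda=\Z K_n$ misreads the paper's definition of saturation: here saturation only asks for a stretching factor, so (using the semigroup property and rationality) it equals $C$ intersected with the partitions $\la$ satisfying $\ell(\la)\le n^2$ and $n\mid|\la|$; the lattice $\Z K_n$ is irrelevant. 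Importing the Remark before the theorem is also unsound: that identification of $\Z K_n$ is known only for $n\ge 7$ and is \emph{false} for $n=2$, so your formula would actually contradict the theorem you are proving in that case. This slip is fixable, but as written the reduction is wrong.

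The more serious gap is at the heart of the matter: you never produce positivity witnesses on the intermediate rays $\omega_k$, $1<k<n^2$, i.e.\ you never show $k(k\times s,\,n\times d,\,n\times d)>0$ (after stretching) for triples of rectangles, and the suggested ``transpose symmetry plus monoid closure'' does not obviously yield them -- this is exactly the nontrivial content. Your fallback via Kempf--Ness is stated imprecisely (the correct criterion is that $0$ lies in the $\SU_n\times\SU_n$-moment image on $\mathbb{P}(V_\la)$, equivalently that the pair of uniform spectra lies in the $U(n)\times U(n)$-moment polytope of the coadjoint orbit through $\la/|\la|$; $\la$ itself does not live in that polytope), and the decisive claim that this moment image ``exhausts the dominant chamber'' because $\SL_n\times\SL_n$ acts irreducibly on $\C^n\ot\C^n$ is not a proof -- irreducibility implies nothing of the sort. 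What is actually needed, and what closes the argument (it is in essence how \cite{buci:09} proceeds, in the language of \cite{ChrMit06,ChristHarrowMitch07,klya:04}), is an explicit construction: for every spectrum with at most $n^2$ parts there is a density operator on $\C^n\ot\C^n$ with that spectrum whose \emph{both} partial traces are maximally mixed, e.g.\ a mixture with prescribed weights of the $n^2$ pairwise orthogonal maximally entangled Weyl--Heisenberg (generalized Bell) states; in particular the uniform mixture of $k$ of them settles the ray $\omega_k$. Without this (or an equivalent explicit highest-weight-vector construction), your proposal asserts rather than proves the theorem's essential content.
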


This result implies that obstructions 
must be gaps of the monoid $K_n$. 
The relevance of Theorem~\ref{th:buci} 
is that it excludes the use of asymptotic 
techniques for finding obstructions.
\medskip 

Theorem~\ref{th:ms} states that $\tilde{k}_n(\la) \le k_n(\la)$. 
However, we only need $\tilde{k}_n(\la) = 0$
for implementing our strategy of proving lower bounds. 
Indeed, 
the replacement of $\tilde{k}_n(\la)$ by the Kronecker coefficient~$k_n(\la)$ 
corresponds to replacing the coordinate ring of the orbit 
closure by the larger coordinate ring of the orbit, 
and this was only done because we better understand the latter.

So one might hope that Theorem~\ref{th:buci} 
fails for the smaller multiplicities $\tilde{k}_n$. 
Unfortunately, this doesn't turn out to be the case. 
Before stating the next result, we introduce a 
certain combinatorial conjecture. 
 
A {\em Latin square of size~$n$} is map $T\colon [n]^2 \to [n]$,
viewed as an $n\times n$ matrix with entries in~$[n]$, 
such that in each row and each column each entry in~$[n]$ 
appears exactly once.
So in each column and row we get a permutation of $[n]$. 
The column sign of $T$ is defined as the product of the signs of 
column permutations. The Latin square~$T$ is called 
{\em column-even} if this sign equals one, otherwise $T$ 
is called {\em column-odd}. 
See Figure~\ref{fig:LS} for an illustration.

\begin{figure}[h]
\begin{center}
$$\begin{array}{cccc}
-&+&-&-\\
1&2&3&4\\
4&1&2&3\\
3&4&1&2\\
2&3&4&1
\end{array}$$
\end{center}
\caption{\small A Latin square with column-sign $-1$.}\label{fig:LS}
\end{figure}

It is an easy exercise to check that if $n>1$ is odd, then there are 
as many column-even Latin squares of size~$n$ as there are 
column-odd Latin squares of size~$n$.

The Alon-Tarsi conjecture~\cite{AT:92} states that if $n$ is even, then  
the number of column-even Latin squares of size~$n$ is different from 
the number of column-odd Latin squares of size~$n$. 
This conjecture is known to be true if $n=p\pm 1$ where $p$ is a prime, 
cf.~\cite{Dri:98,Gly:10}.

The following result is due to Kumar~\cite{Kum:15}. 
(Note that, in contrast with Theorem~\ref{th:buci}, 
it only makes a statement about the $\la$ 
with $\ell(\la)\le n$.)

\begin{theorem}[Kumar]
If the Alon-Tarsi conjecture holds for $n$, then 
for all $\la$ with $\ell(\la) \le n$ such that $|\la|$ is a multiple of~$n$, 
we have $\tilde{k}_n(n\la) >0$.
\end{theorem}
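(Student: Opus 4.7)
The plan is to exhibit, for each such $\la$, an $(\SL_n\times\SL_n)$-invariant vector $F_\la$ lying in a copy of $V_{n\la}$ inside $\Sym^{nd}(\Sym^n\C^{n^2})$ (where $d := |\la|/n$) with $F_\la(\det_n) \neq 0$. Such an $F_\la$ will descend to a nonzero element of $\C[\Det_n]$ of type $V_{n\la}$, proving $\tilde{k}_n(n\la) > 0$ via \eqref{def:ktilde} and the interpretation of $\tilde{k}_n$ as the multiplicity of $V_{n\la}$ in $\C[\Det_n]$.

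To set up the construction I would identify $\C^{n^2} = U\otimes V$ with $U=V=\C^n$, so that $\SL_n\times\SL_n\subset\stab_n$ acts on the two tensor factors. Cauchy's formula gives
\[
  \Sym^n(U\otimes V) \;=\; \bigoplus_{\mu\vdash n} S_\mu U\otimes S_\mu V,
\]
with $\det_n$ the canonical generator of the one-dimensional $(\SL_n\times\SL_n)$-invariant summand $\Lambda^n U\otimes\Lambda^n V$. Since $\SL_n\times\SL_n$ is reductive and fixes $\det_n$, the Reynolds operator argument shows that evaluation at $\det_n$ vanishes on every non-invariant isotypic component; so I only need to work inside the $(\SL_n\times\SL_n)$-invariant part of the $V_{n\la}$-isotype, whose dimension equals $k_n(n\la)$ by the restriction argument in the proof of Theorem~\ref{th:ms}. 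This is exactly where Kronecker coefficients and the Alon-Tarsi combinatorics enter.

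Following Kumar, I would next build these invariants from Young-symmetrizer constructions associated with semistandard tableaux of shape $n\la$ whose entries parametrise pairs in $[n]\times[n]$. When such an invariant is evaluated at $\det_n$, the Cauchy decomposition forces every factor of $\Sym^n(U\otimes V)$ to contribute only through its $\Lambda^n U\otimes\Lambda^n V$-component, so the surviving tableau entries form a Latin-square-like arrangement on $[n]\times[n]$, and the antisymmetrisations in the Young symmetrizer pair with those coming from $\det_n$ to produce the column sign of the resulting Latin square. In the rectangular case $\la=(d,\ldots,d)$ of length $n$, the evaluation collapses, up to a positive combinatorial constant, to the Alon-Tarsi quantity $\mathrm{AT}(n) := \#\{T:\text{ column-even}\} - \#\{T:\text{ column-odd}\}$. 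For general $\la$ of length $\le n$ with $n\mid|\la|$, the Young symmetrizer for shape $n\la$ differs from the rectangular one by extra non-rectangular column contributions; I would show that these factor out as combinatorial weights independent of the Latin-square sign, so the evaluation remains a nonzero scalar multiple of $\mathrm{AT}(n)$. The Alon-Tarsi conjecture for $n$ then yields $F_\la(\det_n)\neq 0$, and hence $\tilde{k}_n(n\la)>0$.

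The hard part will be the combinatorial identification in the previous step: tracking how the antisymmetrisations built into the Young symmetrizer of shape $n\la$ interact with the two antisymmetrisations $\Lambda^n U$ and $\Lambda^n V$ forced by evaluation at $\det_n$, and verifying that after all cancellations the surviving terms are indexed precisely by Latin squares of size $n$ weighted by their column sign. The rectangular case is the transparent one; extending to arbitrary $\la$ of length $\le n$ requires additionally checking that the non-rectangular columns contribute a nonzero overall combinatorial factor, so that no accidental cancellation destroys the Alon-Tarsi quantity.
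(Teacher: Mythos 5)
The survey does not actually prove this theorem; it only states it with a reference to Kumar, so your proposal has to be measured against Kumar's published argument. Your overall architecture is the right one and does match his: reduce the occurrence of $V_{n\la}$ in $\C[\Det_n]$ to exhibiting an element of the $V_{n\la}$-isotypic component of $\C[\Sym^n\C^{n^2}]$ that does not vanish at the point $\det_n$ (nonvanishing at $\det_n$ is equivalent to nonvanishing on the dense orbit, by $\GL_{n^2}$-invariance of the isotypic component), then use the Reynolds operator for the reductive group $\SL_n\times\SL_n\subseteq\stab_n$ to restrict attention to $(\SL_n\times\SL_n)$-invariant vectors, and finally identify the evaluation at $\det_n$ with a signed count of Latin squares. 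The degree bookkeeping ($V_{n\la}$ sought in $\Sym^{|\la|}\Sym^n\C^{n^2}$) is also correct.

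The genuine gap is that the entire content of the theorem sits in the step you defer: you assert, but do not argue, that the evaluation of your Young-symmetrizer invariant at $\det_n$ equals a \emph{nonzero} multiple of the Alon--Tarsi difference, and your treatment of non-rectangular $\la$ (``the extra columns factor out as combinatorial weights independent of the Latin-square sign'') is exactly the kind of claim that can fail by accidental cancellation; nothing in the proposal rules that out. Two concrete repairs, both present in Kumar's proof, are missing. First, the passage from special shapes to general $\la$ with $\ell(\la)\le n$ and $n\mid|\la|$ should not be done by reanalysing one big symmetrizer: since a product of highest weight vectors is a highest weight vector of the sum of the weights, a product of $(\SL_n\times\SL_n)$-invariants is again invariant, and evaluation at $\det_n$ is multiplicative, the set of weights realizable by invariants nonvanishing at $\det_n$ is a semigroup; it therefore suffices to treat the generators coming from the columns of $\la$ (weights $n(1^c)$ for $c\le n$), which decouples the combinatorics. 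Second, the kernel of the computation should be pinned to the identity that the coefficient of $\prod_{i,j}x_{ij}$ in $(\det_n)^n$ equals the number of column-even minus column-odd Latin squares of size $n$ (each monomial of $(\det_n)^n$ contributing $\prod_{i,j}x_{ij}$ corresponds to an $n$-tuple of permutations forming a Latin square, with sign the column sign); this is the precise point where the Alon--Tarsi hypothesis enters, and the individual generator evaluations must be reduced to it (or shown nonzero unconditionally). Without these two steps your argument is a plausible plan rather than a proof.
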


In fact, it is possible to obtain an unconditional result 
at the price of losing the information about the specific 
stretching factor~$n$.
The following result is from~\cite{BHI:15}. 

\begin{theorem}[B, H\"uttenhain, Ikenmeyer]
For all $\la$ with $\ell(\la)\le n$ such that $|\la|$ is a multiple of~$n$, 
there exists $s\ge 1$ such that $\tilde{k}_n(s\la) >0$.
\end{theorem}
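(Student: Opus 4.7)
The plan is to realize $\tilde{K}_n := \{\la : \tilde{k}_n(\la) > 0\}$ as a finitely generated sub-monoid of the dominant weight lattice of $\GL_{n^2}$ and to identify its saturation: the theorem amounts to the statement that every $\la$ with $\ell(\la)\le n$ and $n\mid |\la|$ lies in $\sat(\tilde{K}_n)$. The monoid property is immediate from the ring structure of $\C[\Det_n]$: if $R\in V_\la^*\subset \C[\Det_n]$ and $S\in V_\mu^*\subset\C[\Det_n]$ are highest weight vectors, then $RS$ is a highest weight vector of weight $\la+\mu$ and is nonzero, because $\Det_n$ is irreducible so $\C[\Det_n]$ is a domain. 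Finite generation of such multiplicity monoids for affine $G$-varieties is standard, see~\cite{brion:87}.

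For a finitely generated sub-monoid $M$ of a lattice one has $\sat(M)=(\Q_{\ge 0}M)\cap \Z(M)$, where $\Z(M)$ is the group generated by $M$. Accordingly the proof reduces to two verifications. (A) Cone coverage: the rational cone spanned by $\tilde{K}_n$ contains the cone of all partitions with $\ell(\la)\le n$; equivalently, for each fundamental weight $\omega_i=(1^i,0^{n^2-i})$ with $i\le n$ one locates a positive multiple $c_i\omega_i\in\tilde{K}_n$, which I would do by producing explicit low-degree highest weight vectors in $\Sym^{\bullet}\Sym^n\C^{n^2}$ that do not vanish at $\det_n$ (for instance products of classical discriminant-like invariants or Young-symmetrized expressions in the minors of $X$). (B) Lattice coverage: the group $\Z(\tilde{K}_n)$ should contain the full lattice $\{\la\in\Z^n:n\mid|\la|\}$ (viewed inside $\Z^{n^2}$). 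Once (A) and (B) are in hand, the saturation principle produces, for every $\la$ in the claimed range, an $s\ge 1$ with $s\la\in\tilde{K}_n$, which is exactly $\tilde{k}_n(s\la)>0$.

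The main obstacle will be (B). We know $\tilde{K}_n\subseteq K_n$, and by Theorem~\ref{th:buci} the saturation of $K_n$ is already the full lattice $\{\la:\ell(\la)\le n^2,\, n\mid |\la|\}$, but this does not automatically descend to $\tilde{K}_n$: a given $\stab_n$-invariant in a $V_\la^*$-copy inside $\Sym^d\Sym^n\C^{n^2}$ could vanish at $\det_n$. The key step is to produce, for sufficiently many $\la$ in the target sublattice, at least one highest weight vector that extends to all of $\Det_n$. A natural strategy is to adapt Kumar's Latin-square construction of $\stab_n$-invariants: when the Alon--Tarsi parity obstruction prevents the invariant from being nonzero at the specific stretching factor $s=n$, one instead enlarges $s$ so that the signed Latin-square count can be replaced by a manifestly non-negative quantity (for example by pairing Latin squares into a tensor-square construction whose column-sign is automatically $+1$). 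The price of this detour is precisely the loss of control over $s$, which is why the unconditional conclusion is weaker than Kumar's.
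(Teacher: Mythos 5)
Your general framework (the set $\tilde{K}_n=\{\la:\tilde{k}_n(\la)>0\}$ is a finitely generated monoid because $\C[\Det_n]$ is a domain, and the statement is a saturation/cone statement) is sound, but note first that your step (B) is superfluous: since arbitrary stretching is allowed, $\la$ admits some $s\ge 1$ with $s\la\in\tilde{K}_n$ if and only if $\la$ lies in the rational cone $\Q_{\ge 0}\tilde{K}_n$ (write $\la=\sum_i q_i\mu_i$ with $q_i\in\Q_{\ge0}$, $\mu_i\in\tilde{K}_n$, and let $s$ clear denominators); no control of the group $\Z(\tilde{K}_n)$ is needed. So the entire content of the theorem sits in your step (A): for each $i\le n$ one must actually exhibit a highest weight vector of weight a multiple of $\omega_i$ that does \emph{not} vanish identically on $\Det_n$. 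This is precisely the step you leave unsubstantiated: ``products of discriminant-like invariants or Young-symmetrized expressions in the minors'' is not a construction, and for rectangles with $n$ rows the natural candidates evaluated at $\det_n$ give exactly the signed Latin-square counts behind Kumar's theorem \cite{Kum:15}, i.e.\ you are back at an Alon--Tarsi-type problem.

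Your proposed escape from that sign problem is the genuine gap. Evaluation at $\det_n$ is multiplicative on products of highest weight vectors, so squaring or ``tensor-pairing'' Kumar's invariant $R$ only produces evaluations such as $R(\det_n)^2$, which are positive exactly when $R(\det_n)\neq 0$ --- the very statement you were trying to avoid; and a general highest weight vector of a stretched weight $s\la$ evaluates at $\det_n$ to another uncontrolled signed sum, so there is no ``manifestly non-negative'' replacement in sight, and no indication how enlarging $s$ would force positivity. The proof in \cite{BHI:15} (which this survey only cites) avoids evaluating at $\det_n$ altogether: the only information it uses about $\Det_n$ is that it contains the highly degenerate point $x_{11}\cdots x_{nn}$ (the limit computed in Section~3.1), so that, by the same restriction-surjection argument the survey uses in its final remarks, it suffices to realize some stretch $s\la$ in the coordinate ring of the orbit closure of this monomial --- equivalently, to exhibit a highest weight vector of weight $s\la$ that is nonzero at the monomial, where the combinatorial evaluation has no sign cancellation. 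The loss of control over the stretching factor comes from that detour through a degenerate point of $\Det_n$, not from repairing Latin-square signs; without an idea of this kind your plan stalls exactly at the nonvanishing certificates it needs.
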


\section{Positivity of Kronecker coefficients}

Motivated by the attempt described in the previous section, 
notabable progress was made about understanding when Kronecker coefficients are positive.  
We report on this in the remainder of this survey. 

\subsection{Testing positivity is NP-hard}

It is known that testing the positivity of Littlewood\-Richardson coefficients 
can be done in polynomial time; cf~\cite{gct3,deloera:06,bu-ik:13}. 
Mulmuley conjectured~\cite{gct6} that testing positivity of Kronecker coefficients 
can be done in polynomial time as well. 
For {\em fixed~$m$} and partitions $\la,\mu,\nu$ of {\em length at most~$m$} 
this is true, see ~\cite{cdw:12}. 
However, an exciting recent result~\cite{ik-m-w:15}
shows that in general, this is not the case. 
For the following hardness results, we may even assume that the partitions are given as 
lists of integers encoded in unary.
(A positive integer $m$ encoded in unary has size $m$; thus considering unary 
encoding makes the problem easier.)

\begin{theorem}[Ikenmeyer, Mulmuley, Walter]\label{th:imw}
Testing positivity of  Kronecker coefficients is an NP-hard problem.

\end{theorem}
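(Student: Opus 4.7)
My plan is to establish NP-hardness via a polynomial-time many-one reduction from a strongly NP-hard problem, which is consistent with the paper's remark that the input partitions may be assumed to be encoded in unary. A natural candidate is 3-PARTITION: given positive integers $a_1,\ldots,a_{3m}$ with $\sum_i a_i = mB$, decide whether they admit a partition into $m$ triples each of sum $B$. This problem remains NP-hard even when all $a_i$ and $B$ are bounded by a polynomial in $m$, so the reduction has room to blow the $a_i$ up into partition sizes without leaving the unary regime.

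The reduction would have two ingredients. First, I would fix a regime in which positivity of $k(\la,\mu,\nu)$ admits a transparent description: either a restricted class of shapes where a positive combinatorial rule is available (such as Blasiak's rule when one partition is a hook, cited above), or a character-sum approach based on $k(\la,\mu,\nu)=\frac{1}{N!}\sum_{\sigma\in S_N}\chi_\la(\sigma)\chi_\mu(\sigma)\chi_\nu(\sigma)$ together with Murnaghan--Nakayama. Second, I would design a gadget that, given a 3-PARTITION instance, outputs partitions $\la,\mu,\nu\vdash N$ with $N=\mathrm{poly}(m,B)$, such that the objects enumerated by $k(\la,\mu,\nu)$ are in explicit bijection with valid triple-partitions of $\{a_1,\ldots,a_{3m}\}$. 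Natural design choices encode the $a_i$ in the row or column lengths of one partition, while the other two enforce the ``$m$ blocks of mass $B$'' structure. Polynomial-time computability of $(\la,\mu,\nu)$ in the unary size of the input will follow from the explicit construction.

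The main obstacle will be enforcing the exact-sum constraint through representation-theoretic data. Unlike Littlewood--Richardson coefficients, Kronecker coefficients possess no general positive combinatorial interpretation, and positivity can fluctuate in unpredictable ways under small perturbations of $(\la,\mu,\nu)$. Consequently, the technical heart of the argument lies in producing, inside the chosen special regime, a triple $(\la,\mu,\nu)$ whose positivity captures the 3-PARTITION predicate on the nose. In the hook-shape route this reduces to analyzing which fillings of a prescribed skew shape are compatible with the arithmetic data; in the character route it reduces to controlling cancellations in a signed combinatorial sum. I expect this cancellation analysis to be the most delicate step, and it is where I would focus the bulk of the work; once done, hardness follows since a polynomial-time algorithm for positivity of $k$ would decide 3-PARTITION in polynomial time, contradicting $\mathrm{P}\neq\mathrm{NP}$ (or, unconditionally, establishing the NP-hardness claim).
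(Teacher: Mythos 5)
Your plan stops exactly where the real work begins, and the missing piece is the paper's central idea. You correctly identify the obstacle -- Kronecker coefficients have no known positive combinatorial rule, so there is no direct way to make $k(\la,\mu,\nu)>0$ encode a 3-PARTITION (or any other) predicate ``on the nose'' -- but you offer no mechanism to overcome it; both of your suggested routes (hook shapes, or controlling cancellation in the character sum $\frac{1}{N!}\sum_\sigma \chi_\la\chi_\mu\chi_\nu$) are precisely the approaches that are not known to work, and deferring the ``cancellation analysis'' defers the entire theorem. In particular, the hook case is one where positivity is comparatively well understood, which is evidence against, not for, being able to hide an NP-hard problem there.

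The proof sketched in the paper circumvents the lack of a combinatorial rule by a sandwich argument: $p(\la,\mu,\nu)\le k(\la,\mu,\nu)\le t(\la,\mu,\nu)$, where $t$ counts 3D-relations with marginals $\la',\mu',\nu'$ and $p$ counts pyramids (plane partitions) with those marginals (Lemma~\ref{le:pkt}, proved by realizing both bounds inside $\Lambda^d(\C^m\ot\C^m\ot\C^m)$: weight vectors $v_R$ give the upper bound, and pyramids give highest weight vectors, hence the lower bound). The second key step is to isolate the \emph{simplex-like} triples, for which every 3D-relation with those marginals is forced to be a pyramid, so that $p=k=t$ (Lemma~\ref{le:p=k=t}); on this subfamily, positivity of $k$ coincides exactly with positivity of $t$, i.e.\ with a discrete tomography problem. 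NP-hardness then follows from the Brunetti--Del Lungo--G\'erard result that deciding $t(\la,\mu,\nu)>0$ is NP-hard (Theorem~\ref{th:brunetti}), together with the observation that their reduction (from 3-dimensional matching) can be arranged to output simplex-like triples. Your proposal contains no analogue of either ingredient -- no quantity sandwiching $k$ that is both combinatorially transparent and provably equal to $k$ on the instances your reduction produces -- so as it stands it is a statement of intent rather than a proof.
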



We are going to outline the proof.
By a {\em 3D-relation} we shall understand a finite subset $R$ 
of~$\N^3$. 
For $i\in\N$ we set 
$$
 x_R(i) := \#\{(x,y,z) \in R \mid x=i \} , 
$$ 
and we call the sequence $x_R:= (x_R(0),x_R(1),\ldots)$ 
the $x$-marginal of $R$. 
We may interpret $x_R$ as a partition of $|R]$ 
if the entries of $x_R$ are monotonically decreasing. 
(There is no harm caused by the fact that the indexing 
of $x_R$ starts with $0$.)
Similarly, we define the $y$-marginal $y_R$ and the 
$z$-marginal $z_R$ of $R$. 
Note that if $R$ is contained in the discrete cube $\{0,\ldots,m-1\}^3$, then 
$x_R,y_R,z_R$ have at most $m$ nonzero components.
The problem of reconstructing $R$ from its marginals 
is sometimes called ``discrete tomography''. 

We call a 3D-relation $R$ a {\em pyramid} if $(x,y,z) \in R$ implies 
$(x',y',z') \in R$ for all $(x',y',z') \in \N^3$ such that 
$x'\le x$, $y'\le y$, $z'\le z$. 
In the literature, one often calls pyramids {\em plane partitions}. 
In fact, they are just the 3D-analogues of Young diagrams.

Let $\la'$ denote the partition conjugate to $\la$ 
obtained by a reflection of its Young diagram at the main diagonal. 

\begin{definition}
For $\la,\mu,\nu\vdash d$ we denote by 
$t(\la,\mu,\nu)$ the number of 3D-relations $R$ with 
$x$-marginal $\la'$, $y$-marginal $\mu'$, and $z$-marginal $\nu'$.
Moreover, let $p(\la,\mu,\nu)$ denote the number of 
pyramids $R$ with the marginals $\la',\mu',\nu'$.
\end{definition}

The following result was previously proved by Manivel~\cite{mani:97} 
and rediscovered in~\cite{BI:13,ik-m-w:15}; compare also Vallejo~\cite{vallejo:00}.
 
\begin{lemma}\label{le:pkt}
We have $p(\la,\mu,\nu) \le  k(\la,\mu,\nu) \le t(\la,\mu,\nu)$
for $\la,\mu,\nu\vdash d$. 
\end{lemma}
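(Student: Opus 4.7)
The plan is to prove both inequalities inside the exterior algebra $\Lambda^d V$, where $V := \C^{m_1}\ot\C^{m_2}\ot\C^{m_3}$ with each $m_i\ge d$. This space has a natural basis $\{e_R\}$ indexed by 3D-relations $R\subseteq [m_1]\times[m_2]\times[m_3]$ of size~$d$, where $e_R:=\bigwedge_{(a,b,c)\in R}(e_a\ot e_b\ot e_c)$ for a fixed ordering of the triples. Under the diagonal torus of $\GL_{m_1}\times\GL_{m_2}\times\GL_{m_3}$, the vector $e_R$ has weight $(x_R,y_R,z_R)$, so the weight $(\la',\mu',\nu')$ space of $\Lambda^d V$ has dimension exactly $t(\la,\mu,\nu)$.

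Next I would identify the multiplicity of $V_{\la'}\ot V_{\mu'}\ot V_{\nu'}$ in $\Lambda^d V$. Schur-Weyl duality gives
$$V^{\otimes d}\;\simeq\bigoplus_{\alpha,\beta,\gamma\vdash d}\, [\alpha]\ot[\beta]\ot[\gamma]\;\ot\; V_\alpha\ot V_\beta\ot V_\gamma,$$
and extracting the sign isotypic under the diagonal $S_d$-action produces $\Lambda^d V$. Since $[\gamma]\ot\sgn=[\gamma']$, this yields
$$\Lambda^d V \;\simeq\; \bigoplus_{\alpha,\beta,\gamma\vdash d} k(\alpha,\beta,\gamma')\, V_\alpha\ot V_\beta\ot V_\gamma.$$
Specialising to $(\alpha,\beta,\gamma)=(\la',\mu',\nu')$ and using the symmetry $k(\la',\mu',\nu)=k(\la,\mu,\nu)$ (tensor each Specht factor with $\sgn$ and use $\sgn^2=1$), I conclude that $V_{\la'}\ot V_{\mu'}\ot V_{\nu'}$ appears in $\Lambda^d V$ with multiplicity $k(\la,\mu,\nu)$; equivalently, the highest-weight subspace of weight $(\la',\mu',\nu')$ has dimension $k(\la,\mu,\nu)$.

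The upper bound is then immediate: the highest-weight subspace is contained in the full weight space, so $k(\la,\mu,\nu)\le t(\la,\mu,\nu)$. For the lower bound, I claim that for every pyramid $R$ with marginals $(\la',\mu',\nu')$, the basis vector $e_R$ is itself a highest-weight vector. Indeed, the raising operator $E_{ij}^{(1)}$ with $i<j$ acts as a derivation on $\Lambda^d V$ and replaces each factor $e_j\ot e_b\ot e_c$ by $e_i\ot e_b\ot e_c$; if $(j,b,c)\in R$ then downward-closedness of $R$ forces $(i,b,c)\in R$, so the replacement produces a repeated factor in the wedge and vanishes, and the same argument handles $E_{ij}^{(2)}$ and $E_{ij}^{(3)}$. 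Hence each pyramid gives a highest-weight vector of weight $(\la',\mu',\nu')$, and the $p(\la,\mu,\nu)$ such vectors are linearly independent because they are distinct basis vectors of $\Lambda^d V$, yielding $p(\la,\mu,\nu)\le k(\la,\mu,\nu)$. The only slightly delicate step is the bookkeeping of the conjugate partitions and sign twists in the middle paragraph; once the identification of the multiplicity as $k(\la,\mu,\nu)$ is secured, both inequalities reduce to the textbook facts that the highest-weight subspace has dimension equal to the irreducible multiplicity and is contained in the corresponding weight space.
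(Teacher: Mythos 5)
Your proof is correct and follows essentially the same route as the paper: identify the multiplicity of $V_{\la'}\ot V_{\mu'}\ot V_{\nu'}$ in $\Lambda^d(\C^{m_1}\ot\C^{m_2}\ot\C^{m_3})$ with $k(\la,\mu,\nu)$ via Schur--Weyl duality and a sign twist, note that the basis vectors $e_R$ realize the weight space of weight $(\la',\mu',\nu')$ so its dimension is $t(\la,\mu,\nu)$, and check that pyramids give (linearly independent) highest weight vectors. The only cosmetic difference is that you verify highest-weightness with raising operators acting as derivations, where the paper invokes invariance under unipotent upper triangular matrices; these are the same check.
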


\begin{proof}
Recall that $[\la'] \simeq [\la] \ot [1^d]$, where $d=|\la|$. 
Suppose that $\la',\mu',\nu'$ have at most $m$ parts. 
Then we have 
\begin{eqnarray*}
k(\la,\mu,\nu)  &=& \mult([\la] \ot [\mu] \ot [\nu], [d]) \\
 & = & \mult([\la'] \ot [\mu'] \ot [\nu'], [1^d]) \\
 & = & \mult\big(V_{\la'}(\GL_m) \ot V_{\mu'}(\GL_m) \ot V_{\nu'}(\GL_m), \Lambda^d(\C^m\ot\C^m\ot \C^m)\big) ,
\end{eqnarray*}
where for the last equality we have used Schur-Weyl duality. 

Let $e_j$ denote the $j$th canonical basis vector of $\C^m$. 
To a 3D-relation 
$R =\{ (x_i,y_i,z_i) \mid 1\le i \le d\} \subseteq \{0,\ldots,m-1\}^3$ 
such that $|R|=d$, we assign the vector (only defined up to sign) 
$$
  v_R := \pm\wedge_{i=1}^d (e_{x_i} \ot e_{y_i} \ot e_{z_i}) \ \in\ 
 \wedge^d (\C^m\ot\C^m\ot \C^m) .
$$
Note that the $v_R$ form a basis of $\wedge^d (\C^m\ot\C^m\ot \C^m)$. 
In fact, $v_R$ is a weight vector of weight $(x_R,y_R,z_R)$, since 
for a triple $g=(\diag(a_0,\ldots,a_{m-1}),\diag(b_0,\ldots,b_{m-1}), \diag(c_0,\ldots,c_{m-1}))$ 
of invertible diagonal matrices we have  
$$
 g\cdot v_R = 
 a_0^{x_R(0)}\cdots a_{m-1}^{x_R(m-1)}  b_0^{y_R(0)}\cdots b_{m-1}^{y_R(m-1)}  c_0^{z_R(0)}\cdots c_{m-1}^{z_R(m-1)}\,  v_R .
$$
We conclude that $t(\la,\mu,\nu)$ equals the dimension of the weight space of 
weight $(\la',\mu',\nu')$ in $\Lambda^d(\C^m\ot\C^m\ot \C^m)$. 

At the beginning of the proof, we observed that 
$k(\la,\mu,\nu)$ equals the multiplicity of 
$V_{\la'}(\GL_m) \ot V_{\mu'}(\GL_m) \ot V_{\nu'}(\GL_m)$
in $\Lambda^d(\C^m\ot\C^m\ot \C^m)$,
which is the dimension of the vector space of highest weight vectors
of weight $(\la',\mu',\nu')$ in $\Lambda^d(\C^m\ot\C^m\ot \C^m)$. 
So we conclude that $k(\la,\mu,\nu)\le t(\la,\mu,\nu)$. 

Finally, if $R$ is a pyramid, then it is easy to check that 
$(g_1,g_2,g_3) \cdot v_R = v_R$, where $g_1,g_2,g_3$ are invertible upper triangular matrices 
with $1$'s on the diagonal. In this case, $v_R$ is therefore a highest weight vector.  
This implies $p(\la,\mu,\nu)\le k(\la,\mu,\nu)$. 
\end{proof}

We will show now that certain constraints on the marginals of 
 a 3D-relation~$R$ enforce that $R$ must be a pyramid. 

The distance of the {\em barycenter} 
$b_R := \frac{1}{|R|}\sum_{p\in R} p$
of~$R$ to the linear hyperplane 
orthogonal to the diagonal $(1,1,1)$ 
is given by $h_R := b_R \cdot (1,1,1)^T$,  
up to the scaling factor $\sqrt{3}$. 
The distance~$h_R$ can be expressed in terms of the 
marginals of $R$ by  
\begin{equation}\label{eq:margy}
  \mbox{$ |R|\, h_R = \sum_{(x,y,z)\in R} (x+y+z) 
     = \sum_{i} i\, (x_R(i)  + y_R(i) + z_R(i) ) $.} 
\end{equation}
For $s\ge 1$ we consider the simplex 
$P(s) := \{ (x,y,z) \in\N^3 \mid x+y+z \le s-1\}$,  
which has the cardinality $|P(s)| = s(s+1)(s+2)/6$. 
For $d\ge 1$ we define $s(d)$ as the maximal natural number~$s$ such that 
$|P(s)| \le d$. 

Assume now that a 3D-relation $R$ satisfies 
$P(s)\subseteq R \subset P(s+1)$ for some $s$. 
Then necessarily $s=s(d)$, where $d := |R|$. 
In this situation, it is easy to see that $h_R = h(d)$, where 
\begin{equation}\label{eq:bary}
 \mbox{$h(d) :=  \frac{|P(s)|}{d}\,  h_{P(s)} + (1 - \frac{|P(s)|}{d} )\, s $.}
\end{equation}
If $\la',\mu',\nu'$ denote the marginals of $R$, then we have by \eqref{eq:margy}, 
\begin{equation}\label{def:simplex-like}
  \sum_{i}  i\, (\la'_i + \mu'_i +\nu'_i) = d\, h(d) .
\end{equation}
We call a triple $\la,\mu,\nu \vdash d$ of partitions  
{\em simplex-like} if \eqref{def:simplex-like} holds.

\begin{lemma}\label{le:p=k=t}
Any 3D-relation~$R$, whose marginals are simplex-like, is a pyramid.
Hence $k(\la,\mu,\nu) = t(\la,\mu,\nu)$ if $(\la,\mu,\nu)$ is simplex-like.
\end{lemma}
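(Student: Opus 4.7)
The plan is to recognize the simplex-like condition as saying that $R$ minimizes the sum $\sum_{(x,y,z)\in R}(x+y+z)$ among all $d$-element subsets of $\N^3$, and then to observe that minimizers have a shape that is automatically a pyramid.

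First, I would use \eqref{eq:margy} to rewrite the defining equation \eqref{def:simplex-like} of a simplex-like triple as
$$
 \sum_{(x,y,z)\in R}(x+y+z) \;=\; d\,h(d).
$$
The key step is then to show that the right-hand side is exactly the minimum of the left-hand side taken over all $d$-element subsets $R\subseteq\N^3$, with equality characterized by $P(s)\subseteq R\subseteq P(s+1)$, where $s:=s(d)$. This is a rearrangement argument: to minimize a sum of $d$ distinct values of the weight function $w(x,y,z):=x+y+z$, one must pick the $d$ points with smallest $w$-value. Since by definition of $s(d)$ we have $|P(s)|\le d<|P(s+1)|$, the minimizers are exactly the sets obtained by taking all of $P(s)$ and choosing any $d-|P(s)|$ further points from the diagonal $\{w=s\}$, and formula \eqref{eq:bary} confirms their common value is $d\,h(d)$.

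Next, I would verify that any such sandwiched $R$ is a pyramid: for $(x,y,z)\in R$ and $(x',y',z')\le(x,y,z)$ coordinatewise, one has $x'+y'+z'\le x+y+z\le s$; if the sum is strictly less than $s$ then $(x',y',z')\in P(s)\subseteq R$, while if it equals $s$ then $(x',y',z')=(x,y,z)\in R$. This establishes the first assertion: every 3D-relation whose marginals are simplex-like is a pyramid.

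For the second assertion, Lemma~\ref{le:pkt} provides the sandwich $p(\la,\mu,\nu)\le k(\la,\mu,\nu)\le t(\la,\mu,\nu)$. Under the simplex-like hypothesis, every 3D-relation with marginals $\la',\mu',\nu'$ counted by $t(\la,\mu,\nu)$ is a pyramid by the first assertion, hence contributes to $p(\la,\mu,\nu)$ as well; thus $t(\la,\mu,\nu)=p(\la,\mu,\nu)$, forcing equality throughout. The only step requiring genuine care is the equality case in the minimization, but this amounts to a standard exchange argument on the level sets $\{w=j\}$: if $R$ misses a point in some lower level while containing a point in a higher one, swapping them strictly decreases $|R|\,h_R$, contradicting minimality. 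I do not anticipate any hidden difficulty beyond this.
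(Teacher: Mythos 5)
Your proof is correct and follows the paper's route: the paper dismisses the first assertion as ``easy to prove'' and obtains the second from Lemma~\ref{le:pkt} exactly as you do. Your rearrangement argument --- simplex-like marginals force $\sum_{(x,y,z)\in R}(x+y+z)=d\,h(d)$ via \eqref{eq:margy}, which is the minimal possible weight sum, so equality pins $R$ between $P(s)$ and $P(s+1)$ and hence makes it a pyramid --- is precisely the omitted ``easy'' step, filled in correctly.
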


\begin{proof}
The first assertion is easy to prove  
and the second one follows with Lemma~\ref{le:pkt}.
\end{proof}

The following result was shown in~\cite{BDLG:00}. 

\begin{theorem}[Brunetti, Del Lungo, G\'erard]\label{th:brunetti}
Deciding $t(\la,\mu,\nu)>0$ is an NP-hard problem. 
\end{theorem}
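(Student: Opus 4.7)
\emph{Proof plan.} My plan is to establish NP-hardness via a polynomial-time reduction from a strongly NP-hard problem, since the setting of Theorem~\ref{th:imw} allows us to assume the partitions are given in unary. A natural source is \textsc{3-Partition}: given $3n$ positive integers $a_1,\ldots,a_{3n}$ with $\sum_i a_i = nB$ and each $a_i\in(B/4,B/2)$, decide whether they can be grouped into $n$ triples each summing to $B$. This problem is strongly NP-hard, so any reduction whose output size is polynomial in $n$ and $B$ (hence polynomial in the unary input size) will suffice.

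First I would design, from the input numbers, three partitions $\la,\mu,\nu$ of polynomial size, exploiting the fact that a 3D-relation $R\subseteq\N^3$ is a \emph{set} of distinct points, which turns the marginal constraints into an injective matching problem across the three axes. The guiding intuition is that the $x$-axis carries $n$ distinguishable ``bins'', while the $y$- and $z$-marginals encode the multiset $\{a_1,\ldots,a_{3n}\}$ as weights. Choosing $\la'$ so that each bin receives a fixed total of $B$ points, and tuning $\mu',\nu'$ so that these points must come in blocks of sizes $a_i$, one arranges that any compatible $R$ decodes into an assignment of the $a_i$ to bins; the marginal counts enforce that each bin collects a combined weight of exactly $B$. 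The remainder of the construction consists of dominant ``padding'' entries in each marginal that rigidly fix all but a controlled piece of $R$, leaving only the genuine 3-partition degrees of freedom.

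The hard part will be the rigidity argument: one must show that \emph{every} 3D-relation meeting $(\la,\mu,\nu)$ respects the bin structure, and that within each bin the contributing entries come from exactly three of the $a_i$ summing to $B$—ruling out ``mixing'' configurations that satisfy the marginals in unintended ways. This is precisely where the tight interval condition $a_i\in(B/4,B/2)$ enters: it forbids a bin from being filled with two or four numbers from the input, forcing the decomposition into exactly $n$ triples. Establishing this combinatorial lemma, and simultaneously verifying that the constructed vectors are actually weakly decreasing (hence partitions), is the technical heart of the argument; I would carry it out by an exchange/perturbation analysis on the support of $R$. Once rigidity is secured, both directions of the reduction follow by direct construction, and polynomial-time computability in the unary setting is immediate from the construction.
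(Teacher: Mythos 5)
There is a genuine gap: you never actually construct the reduction. The entire mathematical content of the theorem is the explicit choice of the three marginals and the rigidity lemma showing that \emph{every} 3D-relation with those marginals decodes into a solution of the source problem, and both are left as declarations of intent (``dominant padding entries that rigidly fix all but a controlled piece of $R$'', ``an exchange/perturbation analysis''). Worse, the specific encoding you sketch appears unenforceable with only 1D marginals: if each item $a_j$ is represented by a $y$-slice of size $a_j$ (and/or a block of unit $z$-slices), nothing in the marginal data forces the $a_j$ points of that $y$-slice to share a single $x$-coordinate, so an item can be split across several ``bins'' while all three marginals are still met. The interval condition $a_i\in(B/4,B/2)$ rules out bins containing two or four \emph{whole} items, but it says nothing about such fractional mixing, which happens at the level of individual points; this is exactly the failure mode your plan must exclude, and no mechanism for excluding it is given. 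The known argument achieves rigidity by a quite different device: one forces $R$ to contain a full simplex $P(s)$ so that all freedom is confined to a single diagonal layer $x+y+z=s$ (this is where the ``simplex-like'' condition of Lemma~\ref{le:p=k=t} comes from), and inside that layer the coordinates are coupled tightly enough to encode a combinatorial matching problem.

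For comparison: the paper does not prove Theorem~\ref{th:brunetti} at all; it cites Brunetti--Del Lungo--G\'erard and only records that their reduction is from 3D-matching and can be arranged to output simplex-like triples (which is what feeds into Theorem~\ref{th:imw}). A reduction from 3D-matching is also the structurally natural choice here, since a 3D-relation is literally a set of triples and the marginal constraints directly mirror the covering constraints of a matching, whereas your 3-Partition route has to manufacture that coupling from scratch. Your complexity-theoretic framing (strong NP-hardness, unary encoding) is fine, but as it stands the proposal is an outline whose crucial step is both missing and, in the form sketched, likely false.
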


The catch is that the reduction in the proof of this theorem 
from 3D-matching is such 
that one can actually reduce to simplex-like triples 
$(\la,\mu,\nu)$ of partitions. This completes 
our sketch of the proof of Theorem~\ref{th:imw}.
In fact, the NP-hardness reduction in the proof of Theorem~\ref{th:brunetti}
leads to an efficient and explicit way to produce many gaps of 
the Kronecker monoid. We are not aware of any other way to obtain this result! 
Unfortunately, the reduction breaks down for the most wanted situation of 
partition triples $(\la,\mu,\mu)$ where $\mu$ is a rectangle.
In fact, one can prove that $t(\la,n\times d, n\times d)>0$ if $\la\vdash dn$ 
such that $\ell(\la) \le\min\{d^2,n^2\}$, see \cite[Thm.~6.9]{ik-m-w:15}.
\medskip

From the proof of Theorem~\ref{th:brunetti} one 
obtains the following insights, which show a remarkable  
interplay between computer science and algebraic combinatorics.

\begin{itemize}
\item There is a positive $\sP$-formula for a subclass of 
triples of partitions, whose positivity of Kronecker coefficients
is NP-hard to decide.

\item The Kronecker monoid has many gaps,  
and we can efficiently compute subexponentially many of them. More specifically,
for any $0<\e <1$ there is $0<a<1$ such that for all~$m$, 
there exist $\Omega(2^{m^a})$ many partition triples 
$(\la,\mu,\mu)$ such that 
$k(\la,\mu,\mu) = 0$, but there exists $s\ge 1$ with 
$k(s\la,s\mu,s\mu) > 0$. Moreover, 
$\ell(\mu) \le m^\e$ and $|\la| = |\mu| \le m^3$. 
Finally, there is an efficient algorithm to produce these partitions. 
\end{itemize}

Since the reduction breaks down for the most wanted situation of 
partition triples $(\la,\mu,\mu)$ where $\mu$ is a rectangle, 
this fails to provide a solution for Problem~\ref{problem:find-obstructions}.

\subsection{Testing asymptotic positivity may be feasible} 

We finish by mentioning a further recent insight.

\begin{definition}
The {\em asymptotic positivity problem for Kronecker coefficients} 
is the problem of deciding for given $\la,\mu,\nu$ 
(in binary encoding) whether $k(s\la,s\mu,s\nu) > 0$ for some $s\ge 1$. 
\end{definition}

This problem can be rephrased as a membership problem to a (family of) 
polyhedral cones, that we may call {\em Kronecker cones}. 
They are of relevance for the quantum marginal problem 
of quantum information theory; see
\cite{klya:04,ChrMit06,ChristHarrowMitch07}. 

Theorem~\ref{th:imw} states that the positivity testing problem 
for Kronecker coefficients is NP-hard. 
By contrast, the following recent result~\cite{bcmw:15}
tells us that the asymptotic version of this problem 
should be considerably easier.

\begin{theorem}[B, Christandl, Mulmuley, Walter]\label{th:aspos}
The asymptotic positivity problem for Kronecker coefficients is in 
$\NP\cap\coNP$. 
\end{theorem}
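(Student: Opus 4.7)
My plan is to recast asymptotic positivity as a membership problem in a rational polyhedral cone, and then to exhibit short certificates in each direction separately.

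First I would observe that by Schur--Weyl duality $k(\la,\mu,\nu)$ equals the dimension of a space of $(\GL_a\times\GL_b\times\GL_c)$-invariants inside $\Sym^N(\C^a\ot\C^b\ot\C^c)$, so standard results of Mumford, Ness and Brion identify the set of triples $(\la,\mu,\nu)$ with $k(\la,\mu,\nu)>0$ as (the lattice points in) a finitely generated submonoid of a rational polyhedral cone $K$, the \emph{Kronecker cone}. Moreover, $K$ coincides, up to normalization, with the moment cone of the action of $\mathrm{U}(a)\times\mathrm{U}(b)\times\mathrm{U}(c)$ on $\C^a\ot\C^b\ot\C^c$, so $(\la,\mu,\nu)\in K$ is equivalent to the existence of a pure tripartite quantum state whose three one-body reduced density operators have spectra proportional to $\la,\mu,\nu$. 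Asymptotic positivity thus reduces to membership in~$K$.

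For the $\NP$ direction I would produce a polynomial-bit vector $v\in\C^a\ot\C^b\ot\C^c$ whose three one-body marginals have the prescribed spectra. The existence of such a $v$ over $\C$ is guaranteed by the Kempf--Ness theorem applied to the shifted moment map. To bound its bit-complexity I would combine (i) effective bounds, coming from the Mumford--Brion--Ressayre description of $K$, on the denominators of the extreme rays of~$K$, with (ii) Carath\'eodory's theorem, which lets one write $(\la,\mu,\nu)$ as a convex combination of at most $\dim K$ extreme rays. Verification of the certificate is then elementary linear algebra: form $vv^*$, compute the three partial traces, and compare their spectra with $\la/|\la|,\mu/|\mu|,\nu/|\nu|$.

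For the $\coNP$ direction I would invoke the description, due to Klyachko and refined by Ressayre, of the regular facets of the moment polytope: $K$ is cut out by finitely many explicit linear inequalities, each indexed by a ``dominant'' (or ``well-covering'') tuple of Schubert classes on products of Grassmannians. A $\coNP$ witness then consists of a single violated Ressayre inequality together with its Schubert data. Checking that the inequality is violated at $(\la,\mu,\nu)$ is plain arithmetic, and certifying that the inequality really is a facet of $K$ reduces to the nonvanishing of a Schubert intersection number plus a dominance condition, both of which by work of Belkale--Kumar and Knutson can be verified in polynomial time.

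The main obstacle I expect will be the $\NP$ side. The Kempf--Ness critical point produced by GIT is \emph{a priori} transcendental, so turning it into a short rational certificate requires genuine quantitative control: an effective Hilbert--Mumford criterion and an explicit polynomial-size bound on the extreme rays of $K$. Once those quantitative inputs are in place, packaging the $\NP$ and $\coNP$ certificates should be routine.
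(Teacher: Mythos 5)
Your overall strategy coincides with the one in the paper (and in the underlying reference of B\"urgisser--Christandl--Mulmuley--Walter): recast asymptotic positivity as membership in the Kronecker moment cone via Mumford/Ness/Brion, get $\coNP$ from Ressayre's facet description as made efficiently testable by Vergne and Walter, and get $\NP$ from the description of the cone as the image of the moment map. The $\coNP$ half of your plan is exactly the paper's, with the one proviso that the efficient verifiability of a candidate facet is due to the Vergne--Walter reformulation rather than to Belkale--Kumar/Knutson, whose polynomial-time tests concern the Horn/Littlewood--Richardson setting rather than the Kronecker one.

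The genuine gap is in your $\NP$ certificate, and it is worse than the ``quantitative control'' you flag. You offer two certificates that do not fit together. The first is an exact rational preimage $v$ under the moment map: Kempf--Ness gives you a complex vector whose marginals have the prescribed spectra, but the set of such $v$ is cut out by polynomial equations with no reason to contain a rational point, let alone one of polynomial bit-size, so ``form $vv^*$ and compare spectra'' has nothing of bounded size to verify. The second is a Carath\'eodory decomposition of $(\la,\mu,\nu)$ over extreme rays of $K$: this is circular, since certifying that each extreme ray actually lies in $K$ is an instance of the very membership problem you are trying to put in $\NP$ (knowing that a candidate ray solves a subsystem of tight Vergne--Walter inequalities shows it is a basic solution, not a feasible one). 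Closing the $\NP$ direction requires an additional idea beyond what you have written --- an effective, verifiable surrogate for ``$x$ lies in the image of the moment map'' --- and this is precisely the nontrivial content of the cited work rather than routine packaging.
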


In fact, we have now good reasons to conjecture that the asymptotic positivity problem 
for Kronecker coefficients can be solved in polynomial time. 
In view of the known algorithms and the complicated face structure 
of the Kronecker cones~\cite{ressayre:10,verwa:14}, this is quite surprising.

The proof of Theorem~\ref{th:aspos} 
combines different techniques. 
The containment in $\NP$ is a consequence of the description of the Kronecker cone as the
image of the so-called moment map, which is a consequence of a general result due 
to Mumford~\cite{ness:84}; see also \cite{brion:87}. 
Moment maps are studied in symplectic geometry. 

The basis of the containment in $\coNP$ is a description of the facets 
of the Kronecker cone due to Ressayre~\cite{ressayre:10}. 
Vergne and Walter~\cite{verwa:14} provided a modification of Ressayre's 
description that is efficiently testable, which leads to the containement in $\coNP$.

\section{Note added in proof}

Since the writing of this survey in the fall of 2015, important progress has been made 
with regard to the feasibility of the attempt outlined in Section~\ref{se:attempt}. 

In a breakthrough work,
Ikenmeyer and Panova~\cite{ik-pa:15} showed that 
the vanishing of rectangular Kronecker coefficients 
cannot be used to prove superpolynomial lower bounds on the 
determinantal complexity of the permanent polynomials! 

Recall that by Theorem~\ref{th:kl}, an occurence obstruction~$\la$ 
proving $\dc(m) > n$ necessarily satisfies 
$\ell(\la) \le m^2 +1$ and $\la_1 \ge |\la| ( 1- m/n)$. 
(By a minor modification of the notion of padded permanents, 
we may even assume $\ell(\la) \le m^2$.)  


More specifically, Ikenmeyer and Panova proved the following.

\begin{theorem}[Ikenmeyer and Panova]\label{th:IP}
Let $\la\vdash dn$ such that $\ell(\la) \le m^2$,
$\la_1 \ge |\la| ( 1- m/n)$, and assume $n>3m^4$. 
Then $\pleth_n(\la) >0$ implies $k_n(\la) >0$.
\end{theorem}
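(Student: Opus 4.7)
The plan is to prove $k_n(\la) = k(\la, n\times d, n\times d) > 0$ from the given hypotheses by a two-step strategy: first peel off a rectangular ``bulk'' from $\la$ using the semigroup property of Kronecker coefficients, and then establish positivity for a small residual piece via Lemma~\ref{le:pkt} and an explicit plane partition.

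First I would translate the hypotheses into quantitative structural information on $\la$. The assumption $\la_1 \ge |\la|(1-m/n) = d(n-m)$ forces the tail $\bar\la := (\la_2, \ldots, \la_{m^2})$ to satisfy $|\bar\la| \le dm$ and $\ell(\bar\la) \le m^2 - 1$, so $\bar\la$ is a small perturbation above a nearly full first row. The gap $n > 3m^4$ means that $n$ dominates any constant multiple of $m^2$ and of $\ell(\bar\la)$, which will be the quantitative slack used in the combinatorial construction below. The positivity $\pleth_n(\la) > 0$ guarantees that $n \mid |\la|$ (so $\la \vdash dn$ is consistent) and that $\la$ lies in the plethysm support; its most concrete consequence in what follows is that the shape of $\la$ is combinatorially admissible.

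Next I would choose integers $d_1, d_2 \ge 0$ with $d_1 + d_2 = d$ and $d_2$ of order $m^2$, and write $\la = \alpha + \beta$, where $\alpha = (d_1 n, 0, \ldots, 0)$ is a single row and $\beta = (\la_1 - d_1 n, \la_2, \ldots, \la_{m^2}) \vdash d_2 n$ retains the tail $\bar\la$ attached to a shortened first row. By the semigroup property (additivity in each argument) of Kronecker coefficients, it suffices to prove
\begin{equation*}
k(\alpha, n\times d_1, n\times d_1) > 0 \quad\text{and}\quad k(\beta, n\times d_2, n\times d_2) > 0.
\end{equation*}
The first factor equals $1$: the Specht module $[d_1 n]$ corresponding to $\alpha$ is trivial, so $k((N), \mu, \mu) = \dim([\mu] \otimes [\mu])^{S_N} = 1$ for any $\mu \vdash N$.

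The heart of the argument is the second inequality $k(\beta, n\times d_2, n\times d_2) > 0$. By Lemma~\ref{le:pkt} this reduces to exhibiting a single pyramid $R \subseteq \N^3$ with $x$-marginal $\beta'$ and $y$- and $z$-marginals $(n\times d_2)' = (n, n, \ldots, n)$ ($d_2$ entries). Concretely, I would start from the rectangular block $\{0, \ldots, \beta_1 - 1\} \times \{0, \ldots, d_2 - 1\}^2$, which already nearly matches the rectangular marginals in the $y$ and $z$ directions, and then encode the small tail by stacking extra boxes above this base in a downward-closed fashion indexed by $\bar\beta$. The condition $n > 3m^4$ ensures there is enough room along the $x$-axis to absorb the $|\bar\beta| \le dm$ extra boxes without violating the pyramid property or overshooting the prescribed $y$- and $z$-marginals.

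The main obstacle will be this pyramid construction: simultaneously matching three prescribed marginals while preserving downward closure is delicate, and the precise constant $3$ in the hypothesis $n > 3m^4$ is presumably forced by the counting inequalities that arise in this step. If the direct combinatorial construction proves too intricate, an alternative is to invoke a known near-rectangular positivity statement for Kronecker coefficients (in the spirit of Manivel or Vallejo stability) and verify its hypotheses using the slack from $n > 3m^4$; this would shift the combinatorial work to a black-box lemma but still rely on the same quantitative gap.
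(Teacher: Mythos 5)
The paper itself gives no proof of Theorem~\ref{th:IP} (it is stated in the note added in proof and cited to Ikenmeyer--Panova), so your proposal has to be judged on its own; it has a genuine gap. The decisive symptom is that your argument never actually uses the hypothesis $\pleth_n(\la)>0$ -- you only extract from it that ``the shape of $\la$ is combinatorially admissible''. But the theorem is false without that hypothesis: $\la=(dn-1,1)$ satisfies $\ell(\la)\le m^2$ and $\la_1\ge|\la|(1-m/n)$, yet $k_n(\la)=k\big((dn-1,1),n\times d,n\times d\big)=0$, since the multiplicity of $[\mu]$ in $[\mu]\ot[(N-1,1)]$ equals the number of distinct part sizes of $\mu$ minus one, which vanishes for a rectangle. (Here $\pleth_n(\la)=0$, so the theorem survives -- but nothing in your argument distinguishes this $\la$ from an admissible one.) Concretely, your reduction lands on the claim $k(\beta,n\times d_2,n\times d_2)>0$ for $\beta=(\la_1-d_1n,\la_2,\dots)$, and for $\la=(dn-1,1)$ this target is $k\big((d_2n-1,1),n\times d_2,n\times d_2\big)=0$, i.e.\ false. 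So the second half of your plan cannot go through as stated; some plethysm-derived information about $\overline{\la}$ must enter exactly there, and it never does.

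Moreover, the specific tool you chose for that step, the pyramid lower bound $p\le k$ of Lemma~\ref{le:pkt}, is provably useless in this regime, independently of the previous point. Since $\la_1\ge|\la|(1-m/n)$ with $n>3m^4$, the $x$-marginal $\beta'$ consists almost entirely of parts equal to $1$; in a pyramid the $x$-slabs are nested downward-closed planar diagrams, so every slab of size $1$ is the single box at $(y,z)=(0,0)$, forcing at least $\beta_1-\beta_2$ boxes onto the ray $y=z=0$. The $y$-marginal then has first entry at least $\beta_1-\beta_2\gg n$, incompatible with the required rectangular marginal $(n,\dots,n)$ ($d_2$ entries) as soon as $d_2\ge2$. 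Even in the easiest case $\beta=(d_2n)$, where $k=1$, one has $p=0$; so no choice of $d_2$ or stacking scheme rescues the construction, and the constant $3$ in $n>3m^4$ has nothing to do with such a count. (A smaller, fixable issue: with $d_2$ of order $m^2$ your $\beta$ need not even be a partition, since $|\overline{\la}|$ can be as large as $dm$ with $d$ unbounded.) Your splitting-plus-semigroup skeleton is in the right spirit -- the survey explicitly mentions a ``splitting technique'' behind Theorems~\ref{th:IP} and~\ref{th:BIP}, and $k((N),\mu,\mu)=1$ is fine -- but the substantive content of Ikenmeyer--Panova is precisely the missing piece: genuine positivity results for rectangular Kronecker coefficients $k(\beta,n\times d_2,n\times d_2)$ when $\overline{\beta}$ is small, obtained by character-theoretic/stability arguments rather than by Lemma~\ref{le:pkt}, together with a use of $\pleth_n(\la)>0$ to exclude the exceptional near-one-row shapes such as $(dn-1,1)$ on which positivity fails. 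Your fallback suggestion of citing a Manivel/Vallejo-type stability statement as a black box points in that direction, but as formulated it still bypasses the plethysm hypothesis and hence cannot be correct as it stands.
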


This result does not yet rule out the occurence based approach 
towards $\VP\ne \VNP$ as outlined in Section~\ref{se:attempt}, 
since it refers to the Kronecker coefficients $k_n(\la)$ of 
rectangular partitions and not to the
GCT-coefficients $\tilde{k}_n(\la)$.
(Recall those are the multiplicities 
in the coordinate ring of the orbit closure of $\Det_n$; 
see \eqref{def:ktilde} and Theorem~\ref{th:ms}.) 

However, shortly after the appearance of~\cite{ik-pa:15}, 
B\"urgisser, Ikenmeyer and Panova~\cite{bu-ik-pa:16} proved 
a similarly devastating result for the GCT-coefficients.

\begin{theorem}[B, Ikenmeyer, and Panova]\label{th:BIP}
Let $\la\vdash dn$ such that we have $\ell(\la) \le m^2$, 
$\la_1 \ge |\la| ( 1- m/n)$, and assume $n>m^{25}$. 
Then $\pleth_n(\la) >0$ implies $\tilde{k}_n(\la) >0$.
\end{theorem}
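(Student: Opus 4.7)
The plan is to upgrade Theorem~\ref{th:IP} from a statement about the Kronecker coefficient $k_n(\la)$ to one about the GCT-coefficient $\tilde{k}_n(\la)$. Since Theorem~\ref{th:ms} gives only the upper bound $\tilde{k}_n(\la)\le k_n(\la)$ and this inequality can be strict, one cannot simply invoke Theorem~\ref{th:IP}; one must work directly with the coordinate ring $\C[\Det_n]$ rather than the more accessible $\C[\GL_{n^2}\cdot\det_n]$, and the new content lies in controlling the passage from the orbit to its closure.

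First I would reformulate: $\tilde{k}_n(\la)>0$ is equivalent to the existence of a $\GL_{n^2}$-submodule $V_\la\subset\Sym^d\Sym^n\C^{n^2}$ not contained in the vanishing ideal $I(\Det_n)$. Since the orbit is dense in $\Det_n$, this is equivalent to the evaluation functional $\mathrm{ev}_{\det_n}\colon V_\la\to\C$ being nonzero on some copy of $V_\la$; by $\stab_n$-invariance of $\det_n$, each such functional factors through $V_\la^{\stab_n}$, whose dimension is at most $k_n(\la)$.

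Next, I would construct an explicit copy of $V_\la$ and evaluate it. Using $\pleth_n(\la)>0$ together with the padding constraints $\la_1\ge|\la|(1-m/n)$ and $\ell(\la)\le m^2$, decompose $\la=\mu+\bar\la$, where $\mu$ is a large rectangle of height $n$ (corresponding to an honest power of $\det_n$ as an $\SL_n\times\SL_n$-invariant) and $\bar\la$ is a small residual partition with $|\bar\la|\le|\la|\,m/n$ and $\ell(\bar\la)\le m^2$. Build a concrete highest weight polynomial $R$ of weight $\la$ via a Young-symmetrizer construction that multiplies the ``rectangular'' piece (a power $\det_n^e$) by a highest weight polynomial witnessing~$\bar\la$. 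Evaluating $R$ at $\det_n$ then expands into monomials in the matrix entries of the generic $X$; the rectangular factor contributes a manifestly nonzero leading term, and the hypothesis $n>m^{25}$ provides the slack to bound the potentially cancelling residual contributions.

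The main obstacle is precisely this last nonvanishing check. Theorem~\ref{th:IP} produces an abstract element of $V_\la^{\SL_n\times\SL_n}$ (witnessing $k_n(\la)>0$) but does not guarantee that any \emph{polynomial} highest weight vector of weight $\la$ realizes this invariant through the evaluation at $\det_n$; in principle all $\pleth_n(\la)$ copies of $V_\la$ in the ambient space could be swallowed by $I(\Det_n)$. Ruling this out requires an explicit tableau-level construction whose evaluation descends to a nonzero combinatorial sum, and the substantially stronger bound $n>m^{25}$ (compared with the $n>3m^4$ sufficient for Theorem~\ref{th:IP}) is exactly what buys enough room for the residual part $\bar\la$ to be too small to fully cancel the dominant rectangular contribution.
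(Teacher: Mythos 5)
Your opening reduction is fine: since the orbit of $\det_n$ is dense in $\Det_n$ and the isotypic components are $\GL_{n^2}$-stable, $\tilde{k}_n(\la)>0$ is indeed equivalent to some degree-$d$ highest weight function of weight $\la$ not vanishing at $\det_n$, and such evaluations factor through $(V_\la)^{\stab_n}$. But the construction you build on top of this is broken. First, the decomposition $\la=\mu+\bar\la$ with $\mu$ a rectangle of height $n$ is impossible: the hypotheses force $\ell(\la)\le m^2<n$ (as $n>m^{25}$), while $\mu+\bar\la$ would have at least $n$ nonzero rows; the relevant splitting is into the very long first row and the small remainder $(\la_2,\la_3,\dots)$ of size at most $|\la|\,m/n$. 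Relatedly, $\det_n^e$ is an element of $\Sym^{en}\C^{n^2}$, a polynomial in the matrix entries, not a polynomial function on $\Sym^n\C^{n^2}$, so it cannot occur as a factor of the sought $R\in\Sym^d\Sym^n\C^{n^2}$; you are conflating the $\GL_{n^2}$-weight $\la$ with the rectangular $\GL_n\times\GL_n$-weights $n\times d$ that enter $k_n(\la)=k(\la,n\times d,n\times d)$ through restriction to $\stab_n$. Most importantly, certifying nonvanishing at $\det_n$ of an explicit highest weight vector is not a ``dominant term plus controllable error'' estimate that the slack $n>m^{25}$ can rescue: it is exactly the step nobody knows how to carry out in general (already for weights of the form $n\la$ it is tied to the Alon--Tarsi conjecture via Kumar's theorem), so your plan runs straight into the hard wall instead of around it.

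The actual proof in~\cite{bu-ik-pa:16} avoids evaluating at $\det_n$ altogether. The only information about $\Det_n$ it uses is that $\Det_n$ contains certain padded power sums $t^{n-m}(\ell_1^m+\cdots+\ell_k^m)$ with $k\le m^2$, since such polynomials have small determinantal complexity (on the order of $m^3$, far below $n$). If $Z\subseteq\Det_n$ denotes the $\GL_{n^2}$-orbit closure of such a padded power sum, restriction gives a surjection $\C[\Det_n]\to\C[Z]$ of $\GL_{n^2}$-modules, so it suffices to show that $V_\la$ occurs in $\C[Z]$, i.e.\ that some highest weight function of weight $\la$ does not vanish at the padded power sum. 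That nonvanishing is what is proved combinatorially: highest weight vectors of $\Sym^d\Sym^n V$ are encoded by tableaux of content $d\times n$, their evaluations at sums of powers of linear forms (rank-one tensors) admit an explicit combinatorial expansion, the first-row/remainder splitting is exploited as in Theorem~\ref{th:IP}, and the lifting of highest weight vectors in the inner degree due to Kadish and Landsberg~\cite{kadish-landsberg:14} is used to pass from degree $m$ to degree $n$. To salvage your write-up you would need to replace the evaluation point $\det_n$ by a padded power sum and supply these tableau-evaluation arguments; as it stands, the central nonvanishing claim is unsupported.
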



The main ingredient behind the proof of Theorem~\ref{th:BIP}, 
besides a splitting technique as for Theorem~\ref{th:IP}, 
is the encoding of a generating system of highest weight vectors 
in plethysms $\Sym^d\Sym^n V$ by (classes of) tableaux with contents $d\times n$, as well as 
the analysis of their evaluation at tensors of rank one in a combinatorial way. 
This is similar to \cite{bci:10,ike:12b}.  
A further technique is the ``lifting'' of highest weight vectors of $\Sym^d\Sym^n V$, 
when increasing the inner degree~$n$, 
as introduced by Kadish and Landsberg~\cite{kadish-landsberg:14}. 
This is closely related to stability property of the plethysm coefficients
\cite{wei:90,carre-thibon:92,mani:98}. 
Remarkably, for the proof of Theorem~\ref{th:BIP}, the only information 
needed about the orbit closures $\Det_n$ is that they contain certain padded power sums, 
see also~\cite{buik:15}.

\subsection{Final remarks} 

Unfortunately, Theorem~\ref{th:BIP} rules out the possibility of proving 
$\VP\ne \VNP$ via occurrence obstructions.

Let us emphasize that there still remains the possibility that 
the approach via representation theoretic obstructions may succeed 
when comparing multiplicities. 
Indeed, if the orbit closure $Z_{n,m}$ of the padded permanent
$t^{n-m} \per_m$ is contained in $\Det_n$, then 
the restriction defines a surjective $\GL_{n^2}$-equivariant homomorphism 
$\C[\Omega_n]\to\C[Z_{n,m}]$ of the coordinate rings, 
and hence  the multiplicity of the type $\la$ in 
$\C[Z_{n,m}]$ is bounded from above by
the GCT-coefficient $\tilde{k}_n(\la)$. 
Thus, proving that $\tilde{k}_n(\la)$ is strictly smaller 
than the latter multiplicity implies that  
$Z_{n,m}\not\subseteq\Det_n$. 
Mumuley pointed out to us a paper by 
Larsen and Pink~\cite{lapa:90} that is of 
potential interest in this connection. 

In this context let us remark that~\cite{cdw:12} shows  
that comparing multiplicities by asymptotic methods 
cannot be sufficient for the purpose of complexity separation. 

To conclude, even if the approach via multiplicities should turn out to be impossible 
as well, we should keep in mind that the noncontainment of orbit closures 
in principle can be proved by exhibiting some highest weight vector functions
(see \cite[Prop.~3.3]{BI:13}). Classic invariant theory and representation theory 
should provide guidelines on how to find such functions, even though 
our current understanding of this is very limited.

\bigskip

\noindent{\bf Acknowledgements.} 
I thank Jesko H\"uttenhain, Christian Ikenmeyer, Joseph Landsberg, and Ketan Mulmuley 
for their feedback. Special thanks go to Christian Krattenthaler for his detailed comments 
on the manuscript. I am grateful to the Simons Institute for the Theory of Computing 
in Berkeley for making possible the semester program 
``Algorithms and Complexity in Algebraic Geometry'', 
that provided ideal conditions for achieving the recent progress outlined in this survey.


\def\cprime{$'$}


\end{document}